\renewcommand{\cal}{\mathcal} 
\newcommand{\fra}{\mathfrak} 
\newcommand{\ul}[1]{\underline{#1} \!\,} %underline
\newcommand{\ol}[1]{\overline{#1} \!\,} %overline
\newcommand{\ee}{\mathrm{e}}
\newcommand{\ii}{\mathrm{i}}
\newcommand{\dd}{\mathrm{d}}
\newcommand{\deq}{\mathrel{\mathop:}=}
\newcommand{\umat}{\mathbbmss{1}} %unit matrix
\renewcommand{\leq}{\leqslant}
\renewcommand{\geq}{\geqslant}
\renewcommand{\epsilon}{\varepsilon}
\newcommand{\R}{\mathbb{R}}
\newcommand{\C}{\mathbb{C}}
\newcommand{\N}{\mathbb{N}}
\newcommand{\Z}{\mathbb{Z}}
\newcommand{\p}[1]{({#1})}
\newcommand{\pb}[1]{\bigl({#1}\bigr)}
\newcommand{\pB}[1]{\Bigl({#1}\Bigr)}
\newcommand{\pbb}[1]{\biggl({#1}\biggr)}
\newcommand{\pBB}[1]{\Biggl({#1}\Biggr)}
\newcommand{\q}[1]{[{#1}]}
\newcommand{\qb}[1]{\bigl[{#1}\bigr]}
\newcommand{\qB}[1]{\Bigl[{#1}\Bigr]}
\newcommand{\qbb}[1]{\biggl[{#1}\biggr]}
\newcommand{\qBB}[1]{\Biggl[{#1}\Biggr]}
\newcommand{\h}[1]{\{{#1}\}}
\newcommand{\hb}[1]{\bigl\{{#1}\bigr\}}
\newcommand{\hB}[1]{\Bigl\{{#1}\Bigr\}}
\newcommand{\abs}[1]{\lvert #1 \rvert}
\newcommand{\absb}[1]{\bigl\lvert #1 \bigr\rvert}
\newcommand{\absbb}[1]{\biggl\lvert #1 \biggr\rvert}
\newcommand{\norm}[1]{\lVert #1 \rVert}
\newcommand{\normb}[1]{\bigl\lVert #1 \bigr\rVert}
\newcommand{\normB}[1]{\Bigl\lVert #1 \Bigr\rVert}
\newcommand{\normbb}[1]{\biggl\lVert #1 \biggr\rVert}
\newcommand{\scalar}[2]{\langle{#1} \mspace{2mu}, {#2}\rangle}
\newcommand{\scalarb}[2]{\bigl\langle{#1} \mspace{2mu}, {#2}\bigr\rangle}
\newcommand{\scalarB}[2]{\Bigl\langle{#1} \,\mspace{2mu},\, {#2}\Bigr\rangle}
\newcommand{\scalarBB}[2]{\Biggl\langle{#1} \,\mspace{2mu},\, {#2}\Biggr\rangle}
\newcommand{\com}[2]{[{#1} \mspace{2mu}, {#2}]}
\newcommand{\comb}[2]{\bigl[{#1} \mspace{2mu}, {#2}\bigr]}
\newcommand{\bra}[1]{\langle #1 |}
\newcommand{\brab}[1]{\bigl\langle #1 \bigr|}
\newcommand{\ket}[1]{| #1 \rangle}
\newcommand{\ketb}[1]{\bigl| #1 \bigr\rangle}
\DeclareMathOperator{\tr}{Tr}
\DeclareMathOperator{\re}{Re}
\DeclareMathOperator{\im}{Im}
\DeclareMathOperator{\sgn}{sgn}
\theoremstyle{plain} %plain, definition, remark
\newtheorem{theorem}{Theorem}[section]
\newtheorem*{theorem*}{Theorem}
\newtheorem{lemma}[theorem]{Lemma}
\newtheorem*{lemma*}{Lemma}
\newtheorem*{corollary*}{Corollary}
\newtheorem*{proposition*}{Proposition}
\newtheorem*{definition*}{Definition}
\newtheorem*{example*}{Example}
\newtheorem{remark}[theorem]{Remark}
\newtheorem*{remark*}{Remark}
\newtheorem*{remarks*}{Remarks}
\numberwithin{equation}{section}
\newcommand{\A}{\mathbb{A}}
\newcommand{\B}{\mathbb{B}}
\renewcommand{\p}[1]{\left(#1 \right)}
\renewcommand{\q}[1]{\left[#1 \right]}
\renewcommand{\h}[1]{\left\{#1 \right\}}
\begin{document}
\title{A Microscopic Derivation of the Time-Dependent Hartree-Fock Equation with Coulomb Two-Body Interaction}

\author{
J\"urg Fr\"ohlich${}^1$ \qquad Antti Knowles${^2}$
\\
\\
Institute of Theoretical Physics, 
ETH H\"onggerberg,
\\
CH-8093 Z\"urich, Switzerland ${}^1$
\\
\\
Department of Mathematics, Harvard University\\
Cambridge MA 02138, USA${}^1$
}
\date{August 10, 2011}
\maketitle

\begin{abstract}
We study the dynamics of a Fermi gas with a Coulomb interaction potential, and show that, in a mean-field regime, the 
dynamics is described by the Hartree-Fock equation. This extends previous work of Bardos et al.\ \cite{BardosGolse} to 
the case of unbounded interaction potentials. We also express the mean-field limit as a ``superhamiltonian'' system, and 
state our main result in terms of the Heisenberg-picture dynamics of observables. This is a Egorov-type theorem.
\end{abstract}

\section{Introduction} \label{section: introduction}
The Hartree-Fock equation is a fundamental tool, used throughout physics and chemistry, for describing a system consisting of a large number of fermions. Despite its importance for both conceptual and numerical applications, many questions surrounding it remain unsolved.  One area in which significant progress has been made is the microscopic justification of the static Hartree-Fock equation, which is known to yield the correct asymptotic ground state energy of large atoms and molecules; see \cite{LiebSimon1, LiebSimon2, Bach, Fefferman1, Fefferman2, GrafSolovej}. The time-dependent Hartree-Fock equation, which is supposed to describe the dynamics of a large Fermi system, has received less attention. To our knowledge, the only work in which this equation is derived from microscopic Hamiltonian dynamics is \cite{BardosGolse}. The Cauchy problem for the time-dependent Hartree-Fock equation has also been studied in the literature; see \cite{Bove, Chadam} and especially \cite{Zagatti}, where the Cauchy problem is solved for singular interaction potentials.

A key assumption in \cite{BardosGolse} is that the interaction potential be bounded. A goal of this article is to extend the result of \cite{BardosGolse} to a class of singular interaction potentials, which includes the physically relevant Coulomb potential. We also describe how this mean-field result can be formulated as a Egorov-type theorem.

A system of $N$ spinless\footnote{For simplicity of exposition we omit the spin, whose inclusion is merely a notational complication.} fermions is described by a wave function $\Psi_N(x_1, \dots, x_N) \in \bigwedge^N L^2(\R^3, \dd x)$ which is totally antisymmetric in its arguments. The dynamics of $\Psi_N$ is governed by the usual Schr\"odinger equation. In order to obtain a mean-field limit, the Schr\"odinger equation is rescaled with $N$. In this article we adopt the scaling of \cite{BardosGolse}. The Schr\"odinger equation reads
\begin{equation} \label{schroedinger equation}
\ii \partial_t \Psi_N(t) \;=\; H_N \Psi_N(t)\,,
\end{equation}
where the $N$-particle Hamiltonian $H_N$ is defined by
\begin{equation} \label{mean-field Hamiltonian}
H_N \;\deq\; \sum_{i = 1}^N h_i + \frac{1}{N} \sum_{i < j} w(x_i - x_j)\,.
\end{equation}
Here, $h_i$ is a one-particle Hamiltonian acting on the coordinate $x_i$, typically of the form $h_i = -\Delta_i + v(x_i)$, where $\Delta$ is the three-dimensional Laplacian and $v$ is some external potential; $w$ is the interaction potential. Under the assumptions on $v$ and $w$ we make below, it is easy to see that $H_N$ is a well-defined self-adjoint operator with domain $\bigwedge^N H^2(\R^3)$.

We briefly sketch our main result. Consider a sequence of $N$ orthonormal orbitals $\varphi_1, \dots, \varphi_N$, where $\varphi_i$ is a one-particle wave function. This defines an $N$-particle fermionic state through the Slater determinant
\begin{equation*}
\Psi_N \;\deq\; \varphi_1 \wedge \cdots \wedge \varphi_N\,.
\end{equation*}
Let $\Psi_N(t)$ be the solution of the Schr\"odinger equation \eqref{schroedinger equation} with initial state $\Psi_N$.
In general, $\Psi_N(t)$ is no longer a Slater determinant for $t \neq 0$.
However, one expects that this holds asymptotically for large $N$:
\begin{equation*}
\Psi_N(t) \;\approx\; \varphi_1(t) \wedge \cdots \wedge \varphi_N(t)\,.
\end{equation*}
Here the orbitals $\varphi_1(t), \dots, \varphi_N(t)$ are supposed to solve the Hartree-Fock equation
\begin{equation} \label{rescaled Hartree-Fock}
\ii \partial_t \varphi_i \;=\; h \varphi_i + \frac{1}{N} \sum_{j  = 1}^N (w * \abs{\varphi_j}^2) \varphi_i - \frac{1}{N} \sum_{j = 1}^N \p{w * (\varphi_i \bar{\varphi}_j)} \varphi_j\,.
\end{equation}
Our main result (Theorems \ref{theorem: result 1} and \ref{theorem: result 2} below) is a precise formulation of this asymptotic behaviour.

Aside from the mathematical question of generalizing the result of \cite{BardosGolse} to singular potentials,
this result is of some physical relevance when studying the dynamics of electrons of a large atom in the 
Born-Oppenheimer approximation. Consider an atom of atomic number $N$ (commonly also denoted by $Z$). The nucleus has 
charge $Ne$, where $e$ is the positive unit charge, and is surrounded by $N$ electrons of charge $-e$. We assume that 
the nucleus is immobile; this is heuristically justified by the fact that the nucleus is much heavier than the 
electrons. The Hamiltonian of the electrons reads, in appropriately chosen units,
\begin{equation} \label{unrescaled atomic Hamiltonian}
\sum_{i = 1}^N \pBB{-\Delta_i - \frac{e^2 N}{\abs{x_i}}} + \sum_{1 \leq i < j \leq N} \frac{e^2}{\abs{x_i - x_j}}\,.
\end{equation}
After conjugation with the unitary dilation defined by $x_i \mapsto N^{-1} x_i$ for all $i$, the Hamiltonian \eqref{unrescaled atomic Hamiltonian} becomes
\begin{equation*}
N^2 \qBB{\sum_{i = 1}^N \pBB{-\Delta_i - \frac{e^2}{\abs{x_i}}} + \frac{1}{N} \sum_{1 \leq i < j \leq N} \frac{e^2}{\abs{x_i - x_j}}}\,,
\end{equation*}
which is of the form $N^2 H_N$. We conclude that our results describe the dynamics of atomic electrons at length scales of order $N^{-1}$ and time scales of order $N^{-2}$. The approximation is therefore quite crude. For instance in the Thomas-Fermi atom, most electrons are to be found at length scales of order $N^{-1/3}$, while the innermost electrons (K-shell, etc.) reside at length scales of order $N^{-1}$.

One problem in the above physical model, as well as in the works \cite{LiebSimon1, LiebSimon2, Bach, Fefferman1, Fefferman2, GrafSolovej}, is that, as $N$ becomes large, relativistic effects should be taken into account. Indeed, a simple argument shows that the speeds of the innermost electrons are of order $N$. Another problem in applying the time-dependent Hartree-Fock theory to the dynamics of excited states is that the interaction with the radiation field is neglected. This interaction is responsible for the relaxation of excited states to the ground state of the atom.

A somewhat different physical scenario is an interacting Fermi gas confined to a box of fixed size. As discussed in \cite{NarnhoferSewell, ElgartErdosSchleinYau}, the natural scaling in this situation may be viewed as a combination of mean-field and semiclassical scalings. This problem was first studied in \cite{NarnhoferSewell, Spohn}. The authors show that the limiting dynamics is governed by the Vlasov equation. These results were sharpened in \cite{ElgartErdosSchleinYau}, where the authors compare the Hamiltonian dynamics with the dynamics of the Hartree equation, and derive estimates on the rate of convergence.

A further, physically very different, scenario studied in the literature is an interacting Fermi gas in the weak coupling regime. Here the limiting dynamics is given by a nonlinear Boltzmann equation. See for instance \cite{ErdosSalmhoferYau2004}, in which a nonrigorous derivation is given for a model of interacting fermions on a lattice.

Finally, we outline the key ideas of our proof. It relies on the diagrammatic Schwinger-Dyson expansion and Kato smoothing estimates developed in \cite{bosons}. The main steps are:
\begin{itemize}
\item[(a)]
Use the Schwinger-Dyson expansion to express the Hamiltonian time evolution of a $p$-particle observable. 
\item[(b)]
Show that, in the limit $N \to \infty$, only the tree terms of the Schwinger-Dyson expansion survive.
\item[(c)]
Show that the time evolution of a $p$-particle observable under the Hartree-Fock equation converges to the tree terms of the Schwinger-Dyson series as $N \to \infty$.
\end{itemize}
Steps (a) and (b) have been addressed in \cite{bosons}. Thus, the argument in this paper consists in doing step (c).

This article is organized as follows. In Section \ref{section: Hartree-Fock equation} we introduce the Hartree-Fock equation, discuss its Hamiltonian structure and prove a Schwinger-Dyson series for its time evolution. In Section \ref{section: density matrix Hartree-Fock} we rewrite the Hartree-Fock equation using density matrices. In Sections \ref{section: second quantization} and \ref{section: Slater determinants} we introduce second quantized notation and Slater determinants. After these preparations, we state our main result in Section \ref{section: main result}. The proof is given in Section \ref{section: convergence}. Finally, Section \ref{section: Egorov} is devoted to a Egorov-type formulation of our main result, whereby the microscopic dynamics is recognized as a quantization of a classical ``superhamiltonian'' theory.

\subsubsection*{Conventions}
In the following, the expression ``$A(t)$ holds for small times'' is understood to mean that there is a constant $T$ such that $A(t)$ is true for all $\abs{t} < T$. The precise value of $T$ can always be inferred from the context. To simplify notation, we assume in the following that $t \geq 0$.

The norm of a Hilbert space $\mathcal{H}$ is denoted by $\norm{\cdot}$. We denote by
\begin{equation*}
\mathcal{H}^{(n)}_\pm \;\deq\; P_\pm \mathcal{H}^{\otimes n}
\end{equation*}
the symmetric/antisymmetric subspaces of the tensor product space $\mathcal{H}^{\otimes n}$. Here, $P_\pm$ is the orthogonal projector onto the symmetric/antisymmetric subspace. The Banach space of bounded operators on $\mathcal{H}$ with operator norm is denoted by $(\cal L(\mathcal{H}), \norm{\cdot})$, and the Banach space of trace-class operators on $\mathcal{H}$ with trace norm is denoted by $(\mathcal{L}^1(\mathcal{H}), \norm{\cdot}_1)$.

We use the notation $a^{(p)}_{i_1 \dots i_p} \in \cal L(\mathcal{H}^{\otimes n})$ to denote a $p$-particle operator 
$a^{(p)} \in \cal L(\mathcal{H}^{\otimes p})$ acting on the coordinates $x_{i_1}, \dots, x_{i_p} $ of $n$-particle 
space. Similarly, $\tr_{i_1 \dots i_p}$ denotes a partial trace over the $p$-particle space corresponding to the 
coordinates $x_{i_1}, \dots, x_{i_p}$.

A time subscript of the form $(\cdot)_t$ is always understood to mean time evolution up to time $t$ of $(\cdot)$ with respect to the appropriate free dynamics. We shall explain this in greater detail whenever this notation is used.

The symbol $C$ is reserved for a constant whose dependence on some parameters may be indicated. The value of $C$ need not be the same from one equation to the next.

\subsubsection*{Acknowledgements}
We would like to thank two referees for pointing out inaccuracies in an earlier version of this manuscript.

\section{The Hartree-Fock equation} \label{section: Hartree-Fock equation}
For simplicity of notation, we only consider spinless fermions in the following; the one-particle Hilbert space is $\mathcal{H} \deq L^2(\R^3, \dd x) \equiv L^2(\R^3)$. Merely cosmetic modifications extend our results to the case of spin-$s$ fermions for which the one-particle Hilbert space is $L^2(\R^3) \otimes \C^{2s + 1}$. To fix ideas, we consider the free Hamiltonian $h \deq - \Delta$ and a Coulomb two-body interaction potential $w(x) \deq \kappa \abs{x}^{-1}$. By a simple extension of the results of \cite{bosons}, Section 8, our results remain valid for a free Hamiltonian of the form $h = -\Delta + v$ and a two-body interaction potential $w$, where $w$ is even and $v,w \in L^\infty(\R^3) + L^3_w(\R^3)$ are both real. Here, $L^p_w$ denotes the weak $L^p$-space (see e.g.\ \cite{ReedSimonII}). In particular, we may treat Hamiltonians of the form 
\begin{equation*}
\sum_{i = 1}^N \pBB{-\Delta_i - \frac{e^2}{\abs{x_i}}} + \frac{1}{N} \sum_{1 \leq i < j \leq N} \frac{e^2}{\abs{x_i - x_j}}\,,
\end{equation*}
describing the dynamics of electrons in a large atom, as discussed in Section \ref{section: introduction}.

\subsection{Some notation}
It is convenient to state the time-dependent Hartree-Fock equation in terms of an infinite sequence of orbitals $\Phi = (\varphi_i)_{i \in \N}$ which is an element of the Hilbert space
\begin{equation*}
\tilde{\mathcal{H}} \;\deq\; l^2(\N; L^2(\R^3)) \;=\; l^2(\N) \otimes L^2(\R^3)\,.
\end{equation*}
To simplify notation, we set $\alpha = (x,i)$ and write $\Phi(\alpha) = \varphi_i(x)$. Furthermore, we abbreviate
\begin{equation*}
\int \dd \alpha \;\deq\; \sum_{i \in \N} \int \dd x\,, \qquad
\delta(\alpha - \alpha') \;\deq\; \delta_{i i'} \delta(x - x')\,.
\end{equation*}
The scalar product on $\tilde{\mathcal{H}}$ is then given by
\begin{equation*}
\scalar{\Phi}{\Phi'} \;=\; \int \dd \alpha \; \ol{\Phi(\alpha)} \Phi'(\alpha)\,.
\end{equation*}
Let $a^{(p)} \in \cal L(\mathcal{H}^{\otimes p})$ and define $\tilde{a}^{(p)} \in \cal L(\tilde{\mathcal{H}}^{\otimes p})$ through
\begin{equation*}
\tilde{a}^{(p)} \;\deq\; \umat_{(l^2(\N))^{\otimes p}} \otimes a^{(p)}\,.
\end{equation*}
We have the identity 
\begin{equation} \label{norm of tilde operator}
\norm{\tilde{a}^{(p)}} \;=\; \norm{a^{(p)}}\,.
\end{equation}
Furthermore, one easily finds that
\begin{equation} \label{extended operator calculated}
\scalarb{\Phi^{\otimes p}}{\tilde{a}^{(p)} \Phi^{\otimes p}} \;=\; \sum_{i_1, \dots, i_p \in \N} \scalarb{\varphi_{i_1} \otimes \cdots \otimes \varphi_{i_p}}{a^{(p)} \, \varphi_{i_1} \otimes \cdots \otimes \varphi_{i_p}}\,.
\end{equation}

\subsection{Hamiltonian formulation of the Hartree-Fock equation}
The time-dependent Hartree-Fock equation for the sequence $\Phi$ reads
\begin{equation} \label{Hartree-Fock}
\ii \partial_t \varphi_i \;=\; h \varphi_i + \sum_{j \in \N} (w * \abs{\varphi_j}^2) \varphi_i - \sum_{j \in \N} \p{w * (\varphi_i \bar{\varphi}_j)} \varphi_j\,.
\end{equation}
We begin by noting that \eqref{Hartree-Fock} is the Hamiltonian equation of motion of a classical Hamiltonian system with phase space $\Gamma \deq l^2(\N) \otimes H^1(\R^3)$.

Define the map $\A$, from closed operators $A^{(p)}$ on  $\tilde{\mathcal{H}}_+^{(p)}$ to ``polynomial'' functions on phase space, through
\begin{multline*}
\A(A^{(p)})(\Phi) \;\deq\; \;\scalarb{\Phi^{\otimes p}}{A^{(p)} \Phi^{\otimes p}}\,
\\
=\; \int \dd \alpha_1 \cdots \dd \alpha_p \, \dd \beta_1 \cdots \dd \beta_p \; \ol{\Phi(\alpha_p)} \cdots \ol{\Phi(\alpha_1)} A^{(p)}(\alpha_1, \dots, \alpha_p; \beta_1, \dots, \beta_p) \, \Phi(\beta_1) \cdots \Phi(\beta_p)\,,
\end{multline*}
where $A^{(p)}(\alpha_1, \dots, \alpha_p; \beta_1, \dots, \beta_p)$ is the distribution kernel of $A^{(p)}$ (see \cite{bosons} for details). We denote by $\mathfrak{A}$ the linear hull of functions of the form $\A(A^{(p)})$, with $A^{(p)} \in \cal L(\tilde{\mathcal{H}}_+^{(p)})$.

The Hamilton function is given by
\begin{equation} \label{orbital Hamiltonian}
H \;\deq\; \A(\tilde{h}) + \frac{1}{2} \A(\tilde{\mathcal{W}})\,,
\end{equation}
where
\begin{equation*}
\mathcal{W} \;\deq\; W(\umat - E)\,;
\end{equation*}
here $(E \Psi)(x_1,x_2) \deq \Psi(x_2,x_1)$ is the exchange operator and $W$ is the two-particle operator defined by multiplication by $w(x_1 - x_2)$.
Written out in terms of components, \eqref{orbital Hamiltonian} reads
\begin{equation*}
H(\Phi) \;=\; \sum_{i \in \N} \scalar{\varphi_i}{h \varphi_i} + \frac{1}{2} \sum_{i,j \in \N}  \pb{\scalar{\varphi_i \otimes \varphi_j}{W \, \varphi_i \otimes \varphi_j} - \scalar{\varphi_i \otimes \varphi_j}{W \, \varphi_j \otimes \varphi_i}}\,.
\end{equation*}
Using Sobolev-type inequalities, one readily sees that $H$ is well-defined on $\Gamma$.

A short calculation shows that the Hartree-Fock equation is equivalent to
\begin{equation*}
\ii \partial_t \Phi \;=\; \partial_{\bar{\Phi}} H(\Phi)\,.
\end{equation*}
The symplectic form on $\Gamma$ is given by
\begin{equation*}
\omega \;=\; \ii \int \dd \alpha \; \dd \ol{\Phi}(\alpha) \wedge \dd \Phi(\alpha)\,,
\end{equation*}
which induces the Poisson bracket
\begin{equation} \label{Poisson bracket for Hartree-Fock}
\{\Phi(\alpha), \ol{\Phi}(\beta)\} \;=\; \ii \delta(\alpha - \beta) \,, \qquad \{\Phi(\alpha), \Phi(\beta)\} \;=\; \{\ol{\Phi}(\alpha), \ol{\Phi}(\beta)\} \;=\; 0 \,.
\end{equation}
Thus, for two observables $A, B \in \mathfrak{A}$,
\begin{equation*}
\h{A, B}(\Phi) \;=\; \ii \int \dd \alpha \; \p{\frac{\delta A}{\delta \Phi(\alpha)}(\Phi) \frac{\delta B}{\delta \ol{\Phi}(\alpha)}(\Phi) - \frac{\delta B}{\delta \Phi(\alpha)}(\Phi) \frac{\delta A}{\delta \ol{\Phi}(\alpha)}(\Phi)}\,.
\end{equation*}
The Hamiltonian equation of motion on $\Gamma$ is the Hartree-Fock equation \eqref{Hartree-Fock}.

The conservation laws of the Hartree-Fock flow can be understood in terms of symmetries of the Hamiltonian \eqref{orbital Hamiltonian}. One immediately sees that \eqref{orbital Hamiltonian} is invariant under the rotation $\Phi \mapsto (U \otimes \umat_{L^2(\R^3)}) \Phi$, where $U \in \cal L(l^2(\N))$ is unitary. A one-parameter group of such unitary transformations is generated by linear combinations of the functions $\re \scalar{\varphi_i}{\varphi_j}$ and $\im \scalar{\varphi_i}{\varphi_j}$, which Poisson-commute with the Hamiltonian $\eqref{orbital Hamiltonian}$. By Noether's principle, it follows that $\scalar{\varphi_i}{\varphi_j}$ is (at least formally) conserved. The energy $H$ is of course formally conserved as well.

In order to solve the Hartree-Fock equation \eqref{Hartree-Fock} with initial state $\Phi$, we rewrite it as an integral equation
\begin{equation} \label{integral Hartree-Fock}
\varphi_i(t) \;=\; \ee^{-\ii t h} \varphi_i - \ii \int_0^t \dd s \; \ee^{-\ii (t - s) h}\sum_{j \in \N} \pb{(w * \abs{\varphi_j(s)}^2) \varphi_i(s) - \p{w * (\varphi_i(s) \bar{\varphi}_j(s))} \varphi_j(s)}\,.
\end{equation}
The Cauchy-problem for \eqref{integral Hartree-Fock} was solved in \cite{Zagatti}. We quote the relevant results:
\begin{lemma} \label{lemma: Zagatti}
Let $\Phi \in \tilde{\mathcal{H}}$. Then \eqref{integral Hartree-Fock} has a unique global solution $\Phi(\cdot) \in C(\R; \tilde{\mathcal{H}})$. Furthermore, the quantities $\scalar{\varphi_i}{\varphi_j}$ are conserved. In particular, $\norm{\Phi(t)} = \norm{\Phi}$.
\end{lemma}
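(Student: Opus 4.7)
The plan is standard for semilinear Schr\"odinger-type equations: establish local well-posedness of \eqref{integral Hartree-Fock} on an interval $[0,T]$ with $T$ depending only on $\norm{\Psi}$ via a Banach contraction argument, and then extend to a global solution using conservation of $\norm{\Psi(t)}$, which in turn follows from conservation of the scalar products $\scalar{\psi_i(t)}{\psi_j(t)}$. The main difficulty lies in the local step, as (i) the potential $w(x) = \abs{x}^{-1}$ is singular, (ii) the individual orbitals belong only to $\LL^2$ with no a priori $\HH^1$ control, and (iii) there are infinitely many orbitals to sum over with only $\ell^2$-summability. The key observation that resolves all three difficulties simultaneously is that the nonlinearity organises itself naturally in terms of the density $\rho_\Psi(x) \deq \sum_j \abs{\psi_j(x)}^2$ and the density-matrix kernel $\gamma_\Psi(x,y) \deq \sum_j \psi_j(x) \bar\psi_j(y)$, and these carry a priori bounds $\norm{\rho_\Psi}_{\LL^1(\R^3)} = \tr \gamma_\Psi = \norm{\Psi}^2$.

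\textbf{Local existence.} Rewrite \eqref{integral Hartree-Fock} as $\psi_i(t) = \ee^{-\ii t h} \psi_i - \ii \int_0^t \ee^{-\ii (t-s) h} F(\Psi(s))_i \, \dd s$, with
\begin{equation*}
F(\Psi)_i(x) \;=\; (w * \rho_\Psi)(x)\, \psi_i(x) \;-\; \int \dd y \; w(x-y) \gamma_\Psi(x,y) \psi_i(y)\,.
\end{equation*}
To close the fixed-point scheme, I would combine Strichartz estimates for $\ee^{-\ii t h}$ on $\R^3$ (or the closely related Kato smoothing bounds developed in \cite{bosons}) with Hardy's inequality and the Hardy--Littlewood--Sobolev inequality applied to the Coulomb convolution. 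This way the direct contribution $(w * \rho_\Psi) \psi_i$ is controlled via the $\LL^1$-norm of $\rho_\Psi$ together with Strichartz norms of $\Psi$, while the exchange contribution is controlled by viewing $K_\Psi : f \mapsto \int \dd y \, w(x-y) \gamma_\Psi(x,y) f(y)$ as a bounded operator on $\LL^2$ whose norm is majorised by $\tr \gamma_\Psi$ together with a Hardy-type factor from $\abs{x}^{-1} \leq C\sqrt{-\Delta}$. A standard contraction on a small ball in $C([0,T]; \tilde{\mathcal{H}})$ intersected with a suitable Strichartz-admissible space then yields a unique solution on $[0,T]$ with $T = T(\norm{\Psi}) > 0$.

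\textbf{Conservation and globalization.} Differentiating formally gives
\begin{equation*}
\frac{\dd}{\dd t} \scalar{\psi_i(t)}{\psi_j(t)} \;=\; \ii \scalar{H(\Psi) \psi_i}{\psi_j} \;-\; \ii \scalar{\psi_i}{H(\Psi) \psi_j}\,,
\end{equation*}
where $H(\Psi) \deq h + w * \rho_\Psi - K_\Psi$. Since $h$ is self-adjoint, $w * \rho_\Psi$ is a real multiplication operator, and $K_\Psi$ has Hermitian kernel (as $w$ is real and even and $\overline{\gamma_\Psi(y,x)} = \gamma_\Psi(x,y)$), the effective one-particle Hamiltonian $H(\Psi)$ is self-adjoint and the right-hand side vanishes. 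I would justify this rigorously at the $\tilde{\mathcal{H}}$ level by approximating the initial data by $\HH^2$ sequences and invoking the continuity furnished by step 1. Conservation of $\scalar{\psi_i(t)}{\psi_j(t)}$ in particular yields $\norm{\Psi(t)} = \norm{\Psi}$; as the local existence time $T(\norm{\Psi})$ therefore does not shrink along iterates, the local solution extends to a unique global solution in $C(\R; \tilde{\mathcal{H}})$.
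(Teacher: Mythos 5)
First, a remark on the comparison itself: the paper does not prove this lemma — it is quoted from \cite{Zagatti} ("The Cauchy problem for \eqref{integral Hartree-Fock} was solved in \cite{Zagatti}. We quote the relevant results"). So the only meaningful benchmark is the strategy of that reference, which, like yours, is a Strichartz-type contraction argument followed by conservation laws. Your overall architecture — local well-posedness with existence time depending only on $\norm{\Psi}$, conservation of $\scalar{\psi_i}{\psi_j}$ from self-adjointness of the effective one-particle Hamiltonian (justified by regularising the data), and globalisation — is the right one.

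There is, however, a concrete gap in your local step as written. You claim the exchange operator $K_\Psi : f \mapsto \int \dd y \, w(x-y)\gamma_\Psi(x,y) f(y)$ is bounded on $\LL^2$ with norm controlled by $\tr\gamma_\Psi$ and a Hardy factor $\abs{x}^{-1} \leq C\sqrt{-\Delta}$. This is not available at the regularity you have: Hardy's inequality costs a derivative, and the orbitals are only in $\LL^2$. Indeed, writing $K_\Psi f = \sum_j \psi_j \, \p{w * (\bar{\psi}_j f)}$, one has $\bar{\psi}_j f \in \LL^1$, hence $w * (\bar{\psi}_j f) \in \LL^3_w$ by the weak Young inequality, and $\LL^2 \cdot \LL^3_w \subset \LL^{6/5}_w \not\subset \LL^2$; the same failure occurs for the direct term, since $w * \rho_\Psi \in \LL^3_w$ only. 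Thus neither piece of $F(\Psi(s))$ maps $\tilde{\mathcal{H}}$ into itself at fixed time, and no fixed-time operator bound can close the contraction. The estimates must be carried out entirely in spacetime norms — e.g.\ the Kato-smoothing bound \eqref{Kato smoothing} or $\LL^q_t \LL^r_x$ Strichartz estimates applied inside the Duhamel integral for \emph{both} the direct and exchange contributions, with the sum over $j$ performed inside the spacetime norm (working in $l^2(\N;\LL^q_t\LL^r_x)$-type spaces). Once that is repaired, note also that a contraction in $C([0,T];\tilde{\mathcal{H}})$ intersected with an auxiliary Strichartz space yields uniqueness only in that intersection, whereas the lemma asserts uniqueness in $C(\R;\tilde{\mathcal{H}})$; this unconditional uniqueness needs a further argument. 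The conservation and globalisation step is fine modulo the regularisation you indicate.
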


\subsection{A Schwinger-Dyson expansion for the Hartree-Fock equation}
Our main tool is the Schwinger-Dyson expansion for the flow of the Hartree-Fock equation. We 
use the notation $(\cdot)_t$ to denote free time evolution generated by the free Hamiltonian 
$\A(\tilde{h})$. Explicitly,
\begin{equation*}
A_t(\varphi_1, \varphi_2, \dots) \;=\; A \pb{\ee^{-\ii th } \varphi_1, \ee^{-\ii th } \varphi_2, \dots}\,.
\end{equation*}

\begin{lemma} \label{classical Schwinger-Dyson expansion}
Let $A \in \mathfrak{A}$, $\nu > 0$, and $\Phi(t)$ be the solution of \eqref{integral Hartree-Fock} with initial data $\Phi \in \tilde{\cal H}$. Then, for small times $t$, we have that
\begin{align*}
A(\Phi(t)) &\;=\; A_t(\Phi) + \int_0^t \dd s \; \frac{1}{2} \hb{\A(\tilde{\mathcal{W}}), A_{t-s}}(\Phi(s)) 
\\
&\;=\; \sum_{k = 0}^\infty \frac{1}{2^k} \int_{\Delta^k(t)} \dd \ul{t} \; \hB{\A(\tilde{\mathcal{W}}_{t_k}), \dots \hb{\A(\tilde{\mathcal{W}}_{t_1}), A_t}}(\Phi)\,,
\end{align*}
uniformly for $\Phi \in B_\nu \deq \{\Phi \in \tilde{\mathcal{H}} \,:\, \norm{\Phi}^2 \leq \nu\}$\,.
\end{lemma}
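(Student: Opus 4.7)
The plan is to derive the one-step Duhamel identity directly from the Hamiltonian structure, iterate it to obtain the series, and then establish uniform convergence on $B_\nu$ for small $\abs{t}$ by bounding the iterated Poisson brackets via the Kato-smoothing technology of \cite{bosons}.

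First I would set $F(s) \deq A_{t-s}(\Psi(s))$ for $0 \leq s \leq t$ and differentiate in $s$. The contribution coming from the dependence of $\Psi(s)$ on $s$ is $\hb{H, A_{t-s}}(\Psi(s))$, by the Hartree-Fock equation of motion $\ii \partial_s \Psi = \partial_{\bar\Psi} H(\Psi)$, while the contribution coming from the free time-evolution defining $(\cdot)_{t-s}$ is $-\hb{\A(\tilde h), A_{t-s}}(\Psi(s))$, since free time evolution is exactly the Hamiltonian flow of $\A(\tilde h)$. Using $H = \A(\tilde h) + \tfrac12 \A(\tilde{\mathcal{W}})$, the two kinetic contributions cancel and $\dot F(s) = \tfrac12 \hb{\A(\tilde{\mathcal{W}}), A_{t-s}}(\Psi(s))$; integrating from $0$ to $t$ gives the first identity. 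To obtain the series I would then iterate: the function $B \deq \hb{\A(\tilde{\mathcal{W}}), A_{t-s_1}}$ is again an element of $\mathfrak{A}$, and applying the first identity to $B$ along the Hartree-Fock flow from $0$ to $s_1$ produces $B(\Psi(s_1)) = B_{s_1}(\Psi) + \tfrac12 \int_0^{s_1} \dd s_2 \; \hb{\A(\tilde{\mathcal{W}}), B_{s_1-s_2}}(\Psi(s_2))$. Since the free flow is a symplectomorphism the Poisson bracket commutes with $(\cdot)_r$, so $B_{s_1}(\Psi) = \hb{\A(\tilde{\mathcal{W}}_{s_1}), A_t}(\Psi)$; iterating $k$ times produces the partial sum of the claimed series plus a remainder of the same structural form with one additional nested bracket.

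The hard part is the convergence estimate. Writing $A = \A(A^{(p)})$, an induction based on the Poisson bracket rules \eqref{Poisson bracket for Hartree-Fock} shows that the $k$-th iterated bracket has the form $\A(C^{(p+k)}_{t_1,\ldots,t_k})$ for a closed operator $C^{(p+k)}_{\ul{t}}$ on $\tilde{\mathcal{H}}^{(p+k)}_+$ obtained by partial traces of products of free-evolved copies of $\tilde{\mathcal{W}}_{t_j}$ with $\tilde A^{(p)}_t$; each bracket with $\A(\tilde{\mathcal{W}})$ raises the body number by one. Cauchy-Schwarz and \eqref{extended operator calculated} then yield $\absb{\A(C^{(p+k)}_{\ul{t}})(\Psi)} \leq \normb{C^{(p+k)}_{\ul{t}}}\,\nu^{p+k}$ uniformly on $B_\nu$. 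The remaining, and principal, difficulty is controlling $\int_{\Delta^k(t)} \dd \ul{t} \; \normb{C^{(p+k)}_{\ul{t}}}$ in the presence of the Coulomb singularity in $\mathcal{W} = W(\umat - E)$; this is supplied by the Kato-smoothing / operator-norm estimates developed for the bosonic Schwinger-Dyson series in \cite{bosons}, which apply essentially unchanged here because the exchange factor $E$ is a bounded unitary permutation, so $W(\umat - E)$ is dominated by $W$ alone. The resulting bound on the $k$-th term is summable --- of the form $(C\nu)^{p+k}(Ct)^{k}/k!$, up to the usual square-root Kato factors --- giving uniform absolute convergence on $B_\nu$ for $\abs{t}$ below an explicit threshold, and simultaneously forcing the remainder of the iteration to vanish as $k \to \infty$.
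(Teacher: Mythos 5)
Your proposal is correct and follows essentially the same route as the paper, whose proof simply defers to Lemma 7.1 of \cite{bosons} and records precisely the three ingredients you reconstruct: the Duhamel identity and its iteration, the commutation of $\A(\tilde{\mathcal{W}})$ with the free flow (i.e.\ $\A(\tilde{\mathcal{W}})_t = \A(\tilde{\mathcal{W}}_t)$), and the bound $\norm{E}=1$ so that the Kato-smoothing estimates for $W$ carry over to $\mathcal{W}=W(\umat-E)$. The only cosmetic quibble is your stated bound $(C\nu)^{p+k}(Ct)^k/k!$ on the $k$-th term: after accounting for the combinatorial growth of the iterated brackets the correct bound is geometric in $\sqrt{t}$ rather than factorially suppressed, which is consistent with (and explains why) the conclusion holds only for small times, as you in fact assert.
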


\begin{proof}
The proof of Lemma 7.1 in \cite{bosons} applies with virtually no modifications. One uses \eqref{norm of tilde operator}, the identity
\begin{equation*}
\A(\tilde{\mathcal{W}})_t \;=\; \A(\tilde{\mathcal{W}}_t) \;=\; \A\pb{(W_t (\umat - E))\,\tilde{\;}\,}\,,
\end{equation*}
and $\norm{E} = 1$.
\end{proof}

\section{The density matrix Hartree-Fock equation} \label{section: density matrix Hartree-Fock}
From now on, we only work with orthogonal sequence of orbitals belonging to the set
\begin{equation*}
\mathcal{K} \;\deq\; \{\Phi \in \tilde{\mathcal{H}} \,:\, \scalar{\varphi_i}{\varphi_j} = 0 \text{ for } i\neq j\}\,.
\end{equation*}
By Lemma \ref{lemma: Zagatti}, $\Phi \in \mathcal{K}$ implies that $\Phi(t) \in \mathcal{K}$ for all $t$. To each sequence of orbitals $\Phi$ we assign a one-particle density matrix
\begin{equation} \label{density matrix from sequnce}
\gamma_\Phi \;\deq\; \sum_{i \in \N} \ket{\varphi_i} \bra{\varphi_i}\,.
\end{equation}
It is easy to see that this defines a mapping from $\mathcal{K}$ onto the set of density matrices
\begin{equation*}
\mathcal{D} \;\deq\; \h{\gamma \in \mathcal{L}^1(\mathcal{H}) \,:\, \gamma \geq 0}\,. 
\end{equation*}
Furthermore,
\begin{equation*}
\norm{\gamma_\Phi}_1 \;=\; \norm{\Phi}^2\,.
\end{equation*}
Conversely, one may recover $\Phi$ from $\gamma_\Phi$, up to ordering of the orbitals, by spectral decomposition. 
Also, \eqref{extended operator calculated} implies that
\begin{equation} \label{observables computed using trace}
\A(\tilde{a}^{(p)})(\Phi) \;=\; \tr (a^{(p)} \gamma_\Phi^{\otimes p})\,.
\end{equation}

Next, we note that the Hartree-Fock equation may be formulated in terms of density matrices.
Let $\Phi(t)$ be a solution of the Hartree-Fock equation \eqref{Hartree-Fock}, and abbreviate
\begin{equation*}
\gamma(t) \;=\; \gamma_{\Phi(t)}\,.
\end{equation*}
Then a short calculation shows that
\begin{equation} \label{Hartree-Fock for density matrices}
\ii \partial_t \gamma \;=\; [h, \gamma] + \tr_2 \q{\mathcal{W}, \gamma \otimes \gamma}\,,
\end{equation}
which is the Hartree-Fock equation for density matrices.
As an integral equation in the interaction picture, this reads
\begin{equation} \label{integral Hartree-Fock for density matrices}
\gamma(t) \;=\; \ee^{-\ii th} \, \gamma \, \ee^{\ii th} - \ii \int_0^t \dd s \; \ee^{- \ii (t-s) h} \, \tr_2 \q{\mathcal{W}, \gamma(s) \otimes \gamma(s)} \, \ee^{\ii (t-s) h} \,.
\end{equation}
Sometimes it is convenient to rewrite this using the shorthand
\begin{equation} \label{interaction picture density matrix}
\tilde{\gamma}(t) \;\deq\; \ee^{\ii th} \,\gamma(t) \,\ee^{-\ii th}\,.
\end{equation}
Then \eqref{integral Hartree-Fock for density matrices} is equivalent to
\begin{equation} \label{integral Hartree-Fock for density matrices interaction}
\tilde{\gamma}(t) \;=\; \gamma - \ii \int_0^t \dd s \; \tr_2 \q{\mathcal{W}_s, \tilde{\gamma}(s) \otimes \tilde{\gamma}(s)}\,.
\end{equation}
The next lemma ensures that if $\Phi(t)$ is a general solution of the integral Hartree-Fock equation \eqref{integral Hartree-Fock} then $\gamma_{\Phi(t)}$ solves the integral density matrix equation \eqref{integral Hartree-Fock for density matrices}.
\begin{lemma} \label{Hartree-Fock solves density matrix Hartree-Fock}
Let $\Phi(t)$ be the solution of \eqref{integral Hartree-Fock}. Then $\gamma_{\Phi(t)}$ solves \eqref{integral Hartree-Fock for density matrices}.
\end{lemma}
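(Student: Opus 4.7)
My plan is to verify \eqref{integral Hartree-Fock for density matrices} by a direct computation on integral kernels. I will write $\gamma_{\Psi(t)}(x,y) = \sum_{i \in \N} \psi_i(t,x) \overline{\psi_i(t,y)}$, differentiate formally in $t$, and substitute \eqref{Hartree-Fock} for $\ii \partial_t \psi_i$ together with its complex conjugate for $-\ii \partial_t \bar\psi_i$. This splits $\ii \partial_t \gamma_{\Psi(t)}(x,y)$ into three groups of terms, which I will then match one by one against the right-hand side of the differential form \eqref{Hartree-Fock for density matrices}.

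The kinetic contributions $\sum_i \qb{(h\psi_i)(x)\overline{\psi_i(y)} - \psi_i(x)\overline{(h\psi_i)(y)}}$ assemble immediately into the kernel of $[h,\gamma]$. For the Hartree-type (direct) terms, carrying out the sum over the free index $j$ produces $w*\rho$ with $\rho(x) = \sum_j \abs{\psi_j(x)}^2 = \gamma_{\Psi(t)}(x,x)$, so the contribution is $(w*\rho)(x)\gamma_{\Psi(t)}(x,y) - (w*\rho)(y)\gamma_{\Psi(t)}(x,y)$; a short kernel calculation identifies this with $\tr_2\q{W,\gamma \otimes \gamma}(x,y)$. For the exchange terms $\sum_{i,j}(w*(\psi_i\bar\psi_j))\psi_j\bar\psi_i$ and their adjoint, collapsing the two free sums over $i$ and $j$ back into kernels of $\gamma$ produces $-\tr_2\q{WE, \gamma \otimes \gamma}(x,y)$. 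Adding the three pieces and using $\mathcal{W}=W(\umat-E)$ yields \eqref{Hartree-Fock for density matrices}, from which the integral form \eqref{integral Hartree-Fock for density matrices} follows by Duhamel's principle with respect to the free evolution $\ee^{-\ii t h}$.

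The main technical obstacle is rigour: Lemma \ref{lemma: Zagatti} only provides continuity of $\Psi(\cdot)$ in $\tilde{\mathcal{H}}$, not strong differentiability, and the Coulomb singularity $w(x) = \abs{x}^{-1}$ forces each factor $w * \abs{\psi_j}^2$ to be controlled in a Sobolev or weak-$\LL^p$ norm rather than pointwise. I would bypass both difficulties by carrying out the above algebra on the integral equation \eqref{integral Hartree-Fock} itself: substitute it into the definition of $\gamma_{\Psi(t)}(x,y)$, expand the product, and use the $l^2$-bound $\sum_i \norm{\psi_i(s)}^2 = \norm{\Psi}^2$ from Lemma \ref{lemma: Zagatti} together with the $\LL^3_w$-control of $w$ recalled at the start of Section \ref{section: Hartree-Fock equation} to justify, by dominated convergence, the exchanges of sums, integrals and partial traces. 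After rearrangement all terms reassemble into the right-hand side of \eqref{integral Hartree-Fock for density matrices}.
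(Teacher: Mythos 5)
Your kernel computation is correct and is, in substance, the same identity the paper uses: the paper's evaluation of the Poisson bracket, $\hb{\A(\tilde{\mathcal{W}}), \A(\tilde{a})}(\Psi) = -\ii \tr\pb{(a\otimes\umat)\qb{\mathcal{W},\gamma_\Psi\otimes\gamma_\Psi}}$, is exactly your term-by-term matching of direct and exchange contributions, written in operator rather than kernel form. The real difference is how the analytic issues are handled. The paper never differentiates $\gamma$ at all: it tests against an arbitrary $a\in\mathcal{B}(\mathcal{H})$, invokes the Duhamel/Schwinger--Dyson formula of Lemma \ref{classical Schwinger-Dyson expansion} (whose proof is outsourced to \cite{bosons}) to get the time-integrated identity \eqref{integral equation for one-particle observable} directly, and then concludes \eqref{integral Hartree-Fock for density matrices} by duality. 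This buys rigour for free and avoids both obstacles you correctly identify.

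Your proposed substitute for that step is the weak point. Plugging \eqref{integral Hartree-Fock} into $\sum_i\psi_i(t,x)\bar\psi_i(t,y)$ and ``expanding the product'' produces, besides the free$\times$free term $\ee^{-\ii th}\gamma\,\ee^{\ii th}$, two cross terms and a term quadratic in the Duhamel integrals; these do \emph{not} reassemble by mere rearrangement into the single integral $-\ii\int_0^t\dd s\;\ee^{-\ii(t-s)h}\tr_2\qb{\mathcal{W},\gamma(s)\otimes\gamma(s)}\,\ee^{\ii(t-s)h}$, because the integrands involve $\psi_i(s)$ paired with the \emph{time-zero} data propagated to time $t$, not with $\psi_i(s)$ propagated from $s$ to $t$. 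Converting the orbital integral equation into the density-matrix integral equation genuinely requires a Duhamel-type argument --- e.g.\ showing that $s\mapsto\tr\pb{a\,\ee^{-\ii(t-s)h}\gamma(s)\,\ee^{\ii(t-s)h}}$ is absolutely continuous with the correct a.e.\ derivative, or iterating the orbital equation between intermediate times and passing to a limit --- which is precisely the content of Lemma \ref{classical Schwinger-Dyson expansion}. If you either invoke that lemma (as the paper does) or supply such an argument, your proof is complete; as written, ``after rearrangement all terms reassemble'' papers over the one step that actually needs work.
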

\begin{proof}
Let $a^{(1)} \equiv a \in \cal L(\mathcal{H})$. From 
Lemma \ref{classical Schwinger-Dyson expansion} we get
\begin{equation} \label{integral equation for one-particle observable}
\A(\tilde{a})(\Phi(t)) \;=\; \A(\tilde{a}_t)(\Phi) + \int_0^t \dd s \; \hb{\A(\tilde{\mathcal{W}}), \A(\tilde{a}_{t-s})}(\Phi(s))\,. 
\end{equation}
Now \eqref{Poisson bracket for Hartree-Fock} and \eqref{observables computed using trace} imply
\begin{align*}
\hb{\A(\tilde{\mathcal{W}}), \A(\tilde{a})}(\Phi) &\;=\; \ii \A\pb{\qb{\tilde{\mathcal{W}} , \tilde{a} \otimes \umat}}(\Phi) 
\\
&\;=\; \ii \tr \pb{\qb{\mathcal{W} , a \otimes \umat} \gamma_\Phi \otimes \gamma_\Phi}
\\
&\;=\; - \ii \tr \pb{(a \otimes \umat) \qb{\mathcal{W} , \gamma_\Phi \otimes \gamma_\Phi}}\,.
\end{align*}
Thus \eqref{integral equation for one-particle observable} reads
\begin{align*}
\tr (a \, \gamma_{\Phi(t)}) &\;=\; \tr (a_t \, \gamma_\Phi) - \ii \int_0^t \dd s \; \tr \pb{(a_{t - s} \otimes \umat) \qb{\mathcal{W} , \gamma_{\Phi(s)} \otimes \gamma_{\Phi(s)}}}
\\
&\;=\; \tr \pb{a \, \ee^{\ii t h} \gamma_\Phi \ee^{- \ii th}} - \ii \int_0^t \dd s \; \tr \pB{a \, \ee^{- \ii (t-s) h} \, \tr_2 \q{\mathcal{W}, \gamma_{\Phi(s)} \otimes \gamma_{\Phi(s)}} \, \ee^{\ii (t-s) h} }\,.
\end{align*}
Since $a \in \cal L(\mathcal{H})$ was arbitrary, this is equivalent to \eqref{integral Hartree-Fock for density matrices}.
\end{proof}

\section{Second quantization} \label{section: second quantization}
For the following it is convenient to use second quantized notation; see e.g.\ \cite{BratteliRobinson} and \cite{bosons} for a full account. We introduce the fermionic Fock space
\begin{equation*}
\cal F \;\deq\; \bigoplus_{N \geq 0} \cal H^{(N)}_-\,,
\end{equation*}
where we adopt the usual convention that $\cal H^{(0)}_- = \C$.
A vector $\Psi \in \cal F$ is a sequence $\Psi = (\Psi_0, \Psi_1, \Psi_2, \dots)$ with $\Psi_N \in \cal H^{(N)}_-$ for all $N$. By a slight abuse of notation, we often identify an $N$-particle vector $\Psi_N \in \cal H^{(N)}_-$ with the vector in Fock space whose $N$-particle component equals $\Psi_N$ and whose other components vanish.

On $\cal F$ act the usual fermionic creation and annihilation operators, $a^*$ and $a$, which map the one-particle space into densely defined closable operators on $\cal F$. For $\varphi \in \cal H$ and $\Psi \in \cal F$, they are defined by
\begin{align*}
\pb{a^*(\varphi) \Phi}_N(x_1, \dots, x_N) &\;\deq\; \frac{1}{\sqrt{N}} \sum_{i = 1}^N (-1)^{i+1} \varphi(x_i) \Psi_{N-1}(x_1, \dots, x_{i - i}, x_{i+1}, \dots, x_N)\,,
\\
\pb{a(\varphi) \Phi}_N(x_1, \dots, x_N) &\;\deq\; \sqrt{N+1} \int \dd y \; \ol{\varphi(y)} \Psi_{N+1}(y, x_1, \dots, x_N)\,.
\end{align*}
It is not hard to see that $a^*(\varphi)$ and $a(\varphi)$ are each other's adjoints. Moreover, they satisfy the canonical anticommutation relations
\begin{equation*}
\comb{a(\varphi)}{a^*(\varphi')}_+ \;=\; \scalar{\varphi}{\varphi'}\,, \qquad \comb{a^\sharp(\varphi)}{a^\sharp(\varphi')}_+ \;=\; 0\,,
\end{equation*}
where $\com{A}{B}_+ \deq AB + BA$ is the anticommutator and $a^\sharp$ stands for either $a^*$ or $a$.

Define the operator-valued distributions $a^*(x) \deq a^*(\delta_x)$ and $a(x) \deq a(\delta_x)$, where $\delta_x$ is Dirac's delta function centred at $x$. In other words,
\begin{equation*}
a^*(\varphi) \;=\; \int \dd x \; \varphi(x) \, a^*(x)\,, \qquad a(\varphi) \;=\; \int \dd x \; \ol{\varphi(x)} \, a(x)\,.
\end{equation*}
To streamline notation, it is convenient to introduce the rescaled creation and annihilation operators, defined by
\begin{equation*}
a_\nu^\sharp(x) \;\deq\; \frac{1}{\sqrt{\nu}} \, a^\sharp(x)\,.
\end{equation*}
Here $\nu > 0$ is a parameter that will ultimately be taken to equal $N$, the number of particles.

Let $a^{(p)} \in \cal L(\cal H^{(p)})$ and define its second quantization $\widehat{\A}_\nu(a^{(p)})$, a closed operator on $\cal F$, through
\begin{equation*}
\widehat{\A}_\nu(a^{(p)}) \;\deq\; \int \dd x_1 \cdots \dd x_p \, \dd y_1 \cdots \dd y_p \, a^*_\nu(x_p) \cdots a^*_\nu(x_1)\, a^{(p)}(x_1, \dots, x_p; y_1, \dots, y_p) \, a_\nu(y_1) \cdots a_\nu(y_p)\,,
\end{equation*}
where $a^{(p)}(x_1, \dots, x_p; y_1, \dots, y_p)$ denotes the distribution kernel of $a^{(p)}$. Explicitly, $\widehat{\A}_\nu(a^{(p)})$ is given by (see \cite{bosons})
\begin{equation} \label{second quantization on n-particle space}
\widehat{\A}_\nu(a^{(p)}) \Bigr|_{\cal H^{(N)}_-} \;=\;
\begin{cases}
\frac{p!}{\nu^p} \binom{N}{p} P_- (a^{(p)} \otimes \umat^{(N-p)}) P_- &\text{if } N\geq p
\\
0 &\text{if } N<p\,,
\end{cases}
\end{equation}
which may be viewed as an alternative definition of $\widehat{\A}_\nu(a^{(p)})$. In particular, when restricted to $\cal H^{(N)}_-$, the operator $\widehat{\A}_N(a^{(p)})$ is of order one as $N \to \infty$.

\section{Slater determinants} \label{section: Slater determinants}

Next, we introduce quasi-free states (see \cite{BratteliRobinson} for more details). Their importance for our purposes stems from the fact that the Hartree-Fock equation naturally describes the time evolution of quasi-free states. Let $\gamma \in \cal D$ be a one-particle density matrix. The \emph{quasi-free state} $\omega_\gamma$ associated with $\gamma$ satisfies by definition
\begin{equation*}
\gamma^{(p)}(x_1, \dots, x_p; y_1, \dots, y_p) \;=\; \det \p{\gamma(x_i; y_j)}_{i,j}\,,
\end{equation*}
where
\begin{equation*}
\gamma^{(p)}(x_1, \dots, x_p; y_1, \dots, y_p) \;\deq\; \omega_\gamma\pb{a^*(y_p) \cdots a^*(y_1) a(x_1) \cdots a(x_p)}
\end{equation*}
is the reduced $p$-particle density matrix of $\omega_\gamma$.
In other words, $\gamma^{(p)}$ is the operator kernel of
\begin{equation} \label{p-particle action of quasi-free state}
\gamma^{(p)} \;=\; \gamma^{\otimes p} \, \Sigma_-^{(p)} \,,
\end{equation}
where
\begin{equation*}
\Sigma_-^{(p)} \;\deq\; p! P_-^{(p)}\,.
\end{equation*}
For the following calculations it is convenient to introduce the symbol $\epsilon_{i_1 \dots i_p}^{j_1 \dots j_p}$, which is equal to $\sgn \sigma$ if $i_1, \dots, i_p$ are disjoint and there is a permutation $\sigma \in S_p$ such that $(i_1, \dots, i_p) = (j_{\sigma(1)}, \dots, j_{\sigma(p)})$, and equal to $0$ otherwise. Also, for the remainder of this section, summation over any index appearing twice in an equation is implied.
\begin{lemma} \label{lemma: bound on gamma_p}
Let $\gamma \in \mathcal{D}$ with $\tr \gamma = 1$. Then $\tr \gamma^{(p)} \leq 1$.
\end{lemma}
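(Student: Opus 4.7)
The plan is to reduce the statement to a computation with the spectral decomposition of $\gamma$ and then recognize the resulting sum as bounded by a multinomial-type expression.

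First, since $\gamma \in \mathcal{D}$ is a positive trace-class operator, I would diagonalize $\gamma = \sum_i \lambda_i \ket{e_i}\bra{e_i}$, where $(e_i)$ is an orthonormal basis of $\mathcal{H}$ and $\lambda_i \geq 0$ with $\sum_i \lambda_i = \tr \gamma = 1$. Then $\gamma^{\otimes p}$ is diagonal in the product basis $\{e_{i_1} \otimes \cdots \otimes e_{i_p}\}$ with eigenvalues $\lambda_{i_1} \cdots \lambda_{i_p}$, so by \eqref{p-particle action of quasi-free state}
\begin{equation*}
\tr \gamma^{(p)} \;=\; \tr\pb{\gamma^{\otimes p} \Sigma_-^{(p)}} \;=\; \sum_{i_1, \dots, i_p} \lambda_{i_1} \cdots \lambda_{i_p} \, \scalarb{e_{i_1} \otimes \cdots \otimes e_{i_p}}{\Sigma_-^{(p)} \, e_{i_1} \otimes \cdots \otimes e_{i_p}}\,.
\end{equation*}

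Next, I would evaluate the diagonal matrix element. Writing $\Sigma_-^{(p)} = \sum_{\sigma \in S_p} \sgn(\sigma)\, \pi_\sigma$, where $\pi_\sigma$ permutes tensor factors, the determinantal identity gives
\begin{equation*}
\scalarb{e_{i_1} \otimes \cdots \otimes e_{i_p}}{\Sigma_-^{(p)} \, e_{i_1} \otimes \cdots \otimes e_{i_p}} \;=\; \det\pb{\scalar{e_{i_k}}{e_{i_l}}}_{k,l} \;=\; \det\pb{\delta_{i_k i_l}}_{k,l}\,.
\end{equation*}
This determinant equals $1$ when the indices $i_1, \dots, i_p$ are pairwise distinct (the matrix is the identity) and $0$ otherwise (two rows coincide). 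Consequently
\begin{equation*}
\tr \gamma^{(p)} \;=\; \sum_{i_1, \dots, i_p \text{ distinct}} \lambda_{i_1} \cdots \lambda_{i_p}\,.
\end{equation*}

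Finally, since every $\lambda_i \geq 0$, dropping the distinctness constraint can only increase the sum, so
\begin{equation*}
\tr \gamma^{(p)} \;\leq\; \sum_{i_1, \dots, i_p} \lambda_{i_1} \cdots \lambda_{i_p} \;=\; \pbb{\sum_i \lambda_i}^{p} \;=\; (\tr \gamma)^p \;=\; 1\,,
\end{equation*}
which is the desired bound. There is no real obstacle here: the only point that merits verification is the determinantal identity for matrix elements of $\Sigma_-^{(p)}$, which is standard for antisymmetric tensors. The argument also makes transparent that equality $\tr\gamma^{(p)} = 1$ forces $\gamma$ to be a rank-$p$ projection times $1/p$-like structure — in particular, $\tr \gamma^{(p)} = 1$ when $\gamma$ is a rank-$p$ projection with $p$ equal eigenvalues, which is the Slater determinant case relevant for the sequel.
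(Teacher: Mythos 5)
Your proof is correct and is essentially the paper's own argument: both diagonalize $\gamma$, reduce $\tr\gamma^{(p)}$ to the sum of $\lambda_{i_1}\cdots\lambda_{i_p}$ over pairwise distinct indices (your Gram determinant $\det(\delta_{i_k i_l})$ is exactly the paper's $\epsilon_{i_1\dots i_p}^{j_1\dots j_p}$ contracted on the diagonal), and then drop the distinctness constraint. One caveat on your closing aside: equality is in fact never attained for $p\geq 2$ (any $\lambda_i>0$ contributes a discarded term $\lambda_i^p>0$), and for the normalized rank-$p$ projection one gets $\tr\gamma^{(p)}=p!/p^p<1$; this does not affect the lemma, whose claim is only the inequality.
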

\begin{proof}
There is an orthonormal basis $(\varphi_i)_{i \in \N}$ and a sequence of nonnegative numbers $(\lambda_i)_{i \in \N}$ such that $\sum_i \lambda_i = 1$ and $\gamma = \sum_i \lambda_i \ket{\varphi_i} \bra{\varphi_i}$. Therefore,
\begin{equation*}
\gamma^{(p)} \;=\; \epsilon_{i_1 \dots i_p}^{j_1 \dots j_p} \lambda_{i_1} \cdots \lambda_{i_p} \ket{\varphi_{i_1} \otimes \cdots \otimes \varphi_{i_p}} \bra{\varphi_{j_1} \otimes \cdots \otimes \varphi_{j_p}}
\end{equation*}
This yields
\begin{align*}
\tr \gamma^{(p)} &\;=\; \epsilon_{i_1 \dots i_p}^{j_1 \dots j_p} \lambda_{i_1} \cdots \lambda_{i_p} \delta_{i_1 k_1} \cdots \delta_{i_p k_p} \delta_{j_1 k_1} \cdots \delta_{j_p k_p}
\\
&\;=\; \sum_{i_1, \dots, i_p \text{ disjoint}} \lambda_{i_1} \cdots \lambda_{i_p}
\\
&\;\leq\; \sum_{i_1, \dots, i_p} \lambda_{i_1} \cdots \lambda_{i_p} \;=\; 1\,.
\qedhere
\end{align*}
\end{proof}

Next, we introduce a special class of quasi-free states, described by \emph{Slater determinants}.
Take an orthonormal sequence of orbitals $\Phi = (\varphi_i)_{i \in \N}$ and denote by $\Phi^{(N)}$ the truncated sequence $(\varphi_1, \dots, \varphi_N, 0, \dots)$.
We define the $N$-particle Slater determinant as
\begin{equation*}
S(\Phi^{(N)}) \;\deq\; \varphi_1 \wedge \cdots \wedge \varphi_N \;=\; \frac{1}{\sqrt{N!} }a^*(\varphi_N) \cdots a^*(\varphi_1) \Omega \;=\; \sqrt{N!} \, P^{(N)}_- \, \varphi_1 \otimes \cdots \otimes \varphi_N \;\in\; \mathcal{H}^{(N)}_-\,.
\end{equation*}
Note that the normalization is chosen so that $\norm{S(\Phi^{(N)})} = 1$. The corresponding $N$-particle density matrix is
\begin{equation*}
\Gamma_N \;\deq\; \ket{S(\Phi^{(N)})} \bra{S(\Phi^{(N)})}\,.
\end{equation*}
The $p$-particle marginals of $\Gamma_{N}$ are given by
\begin{align}
\Gamma_N^{(p)} &\;\deq\; \tr_{p+1 \dots N} \Gamma_N
\notag \\
&\;=\; \tr_{p+1 \dots N} \frac{1}{N!} \, \epsilon_{i_1 \dots i_N} \epsilon_{j_1 \dots j_N} \, \ketb{\varphi_{i_1} \otimes \cdots \otimes \varphi_{i_N}} \brab{\varphi_{j_1} \otimes \cdots \otimes \varphi_{j_N}}
\notag \\ \label{p-particle marginals computed}
&\;=\; \frac{(N-p)!}{N!} \,\epsilon_{i_1 \dots i_p}^{j_1 \dots j_p} \, \ketb{\varphi_{i_1} \otimes \cdots \otimes \varphi_{i_p}} \brab{\varphi_{j_1} \otimes \cdots \otimes \varphi_{j_p}}\,.
\end{align}

In particular,
\begin{equation} \label{one-particle density matrix of a Slater determinant}
\Gamma_N^{(1)} \;=\; \frac{1}{N} \sum_{i = 1}^N \ket{\varphi_i} \bra{\varphi_i} \;=\; \gamma_N\,,
\end{equation}
where $\gamma_N$ is the one-particle density matrix associated (by \eqref{density matrix from sequnce}) with the normalized truncated sequence
\begin{equation*}
\tilde{\Phi}^{(N)} \;\deq\; \frac{1}{\sqrt{N}} \Phi^{(N)}\,.
\end{equation*}
Thus, the explicit forms \eqref{p-particle action of quasi-free state} and \eqref{p-particle marginals computed} imply a relation between the reduced density matrices and the marginals:
\begin{equation} \label{relation between marginals}
\gamma_N^{(p)} \;=\; \frac{1}{N^p} \, \epsilon_{i_1 \dots i_p}^{j_1 \dots j_p} \, \ketb{\varphi_{i_1} \otimes \cdots \otimes \varphi_{i_p}} \brab{\varphi_{j_1} \otimes \cdots \otimes \varphi_{j_p}} \;=\; \frac{p!}{N^p} \binom{N}{p} \, \Gamma_N^{(p)}\,.
\end{equation}
In other words, Slater determinants determine quasi-free states by their $p$-particle marginals. The normalization $\frac{p!}{N^p} \binom{N}{p}$ differs slightly from the usual normalization $1$ of quasi-free states, but in the limit $N \to \infty$ this difference vanishes.
Recalling \eqref{second quantization on n-particle space}, we see from \eqref{relation between marginals} that
\begin{equation} \label{expectation in Slater determinant}
\scalarb{S(\Phi^{(N)})}{\widehat{\A}_N(a^{(p)}) S(\Phi^{(N)})} \;=\; \tr \pb{a^{(p)} \gamma_N^{(p)}}\,.
\end{equation}

We also note that
\begin{equation} \label{operator norm of gamma}
\norm{\gamma_N} \;=\; \frac{1}{N}\,.
\end{equation}
This is a special case of the well-known statement (see e.g.\ \cite{LiebLoss}) that $\normb{\tr_{2 \dots N} \Gamma} \leq N^{-1}$, for any fermionic $N$-particle density matrix $\Gamma$ \,\footnote{This can also be inferred from \eqref{operator norm of gamma} by writing $\Gamma$ as a linear combination of projectors.}. The estimate \eqref{operator norm of gamma} will play a fundamental role in our analysis.

Finally, we remark that the sequence $\Phi^{(N)}(t)$ satisfies the rescaled Hartree-Fock equation \eqref{rescaled Hartree-Fock} if and only if the normalized sequence $\tilde{\Phi}^{(N)}(t)$ satisfies the unrescaled Hartree-Fock equation \eqref{Hartree-Fock}. Similarly,
$\tilde{\Phi}^{(N)}(t)$ is a solution of the integral equation \eqref{integral Hartree-Fock} if and only if $\Phi^{(N)}(t)$ is a solution of the rescaled integral equation
\begin{equation} \label{rescaled integral Hartree-Fock}
\varphi_i(t) \;=\; \ee^{-\ii t h} \varphi_i - \frac{\ii}{N} \int_0^t \dd s \; \ee^{-\ii (t - s) h} \sum_{j =1}^N \pb{(w * \abs{\varphi_j(s)}^2) \varphi_i(s) - \p{w * (\varphi_i(s) \bar{\varphi}_j(s))} \varphi_j(s)}\,.
\end{equation}

\section{The limit and main result} \label{section: main result}

We may now state our main result. Take an infinite sequence $\Phi = (\varphi_i)_{i \in \N}$ of orthonormal orbitals, and denote the truncated sequences by $\Phi^{(N)}$. Let $\Phi^{(N)}(t)$ be the solution of the rescaled integral Hartree-Fock equation \eqref{rescaled integral Hartree-Fock} with initial data $\Phi^{(N)}$.

\begin{theorem} \label{theorem: result 1}
Let $p \in \N$ and $a^{(p)} \in \cal L (\cal H^{(p)}_-)$. Then, for any $t \in \R$, we have
\begin{equation*}
\scalarB{\ee^{-\ii t H_N} S(\Phi^{(N)})}{\widehat{\A}_N(a^{(p)}) \, \ee^{-\ii t H_N} S(\Phi^{(N)})} - \scalarB{S\pb{\Phi^{(N)}(t)}}{\widehat{\A}_N(a^{(p)}) S\pb{\Phi^{(N)}(t)}} \;\longrightarrow\; 0
\end{equation*}
as $N \to \infty$.
\end{theorem}

We may also express our main result in terms of density matrices. Denote by
\begin{equation*}
\Gamma_N(t) \;\deq\; \ee^{-\ii t H_N} \, \ketb{S(\Phi^{(N)})} \brab{S(\Phi^{(N)})} \, \ee^{\ii t H_N}
\end{equation*}
the $N$-particle density matrix evolved in time using the $N$-body dynamics. Similarly, denote by
\begin{equation*}
\tilde{\Gamma}_N(t) \;\deq\; \ketb{S\pb{\Phi^{(N)}(t)}} \brab{S\pb{\Phi^{(N)}(t)}}
\end{equation*}
the $N$-particle density matrix evolved in time using the Hartree-Fock dynamics. Denote by $\Gamma^{(p)}_N(t)$ and $\tilde{\Gamma}_N^{(p)}(t)$ their respective $p$-particle marginals.
\begin{theorem} \label{theorem: result 2}
Let $p \in \N$. Then, for any $t \in \R$, we have
\begin{equation*}
\lim_{N \to \infty} \normb{\Gamma_N^{(p)}(t) - \tilde{\Gamma}_N^{(p)}(t)}_1 \;=\; 0\,.
\end{equation*}
\end{theorem}

\begin{remark}
The limit $N \to \infty$ of $\Gamma^{(p)}_N(t)$ does not exist in $\norm{\cdot}_1$. Indeed, $\lim_{N \to \infty} \norm{\Gamma^{(p)}_N(t)} = 0$ but $\tr \Gamma^{(p)}_N(t) = 1$ (similarly for $\tilde{\Gamma}_N^{(p)}(t)$).
\end{remark}

\begin{remark}
As mentioned in the beginning of Section \ref{section: Hartree-Fock equation}, both Theorems \ref{theorem: result 1} and \ref{theorem: result 2} extend trivially to the case of spin-$s$ fermions. In that case, we replace the space $\tilde {\cal H}$ from Section \ref{section: Hartree-Fock equation} with the space $l^2(\N) \otimes L^2(\R^3; \C^{2s + 1})$. Thus, $\alpha = (x,\sigma,i)$ where $\sigma = -s, -s+1, \dots, s$ denotes the spin index. Vectors $\Phi = (\varphi_i) \in \tilde{\cal H}$ are now sequences of wave functions $\varphi_i \equiv \varphi_i(x,\sigma)$ which also depend on the spin index $\sigma$. The exchange operator $E$ acts in the natural way: $(E \Psi)(x_1, \sigma_1; x_2, \sigma_2) = \Psi(x_2, \sigma_2; x_1, \sigma_1)$. With these minor modifications, the statements and proofs of Theorems \ref{theorem: result 1} and \ref{theorem: result 2} may be taken over verbatim.
\end{remark}

\begin{remark}
As in \cite{bosons}, one can show that the rate of convergence in Theorems \ref{theorem: result 1} and \ref{theorem: result 2} is a power law $N^{-\beta(t)}$, with $\beta(t) > 0$ for all $t$. However, $\beta(t) \to 0$ as $t \to \infty$. Our bound on the rate of convergence is therefore far from the expected optimal rate $\beta(t) = 1$, which we only obtain for short times.
\end{remark}

\section{Proof of Theorems \ref{theorem: result 1} and \ref{theorem: result 2}} \label{section: convergence}
The main tool of our proof is the graph expansion scheme developed in \cite{bosons}.

\subsection{The Schwinger-Dyson graph expansion} \label{subsection: Schwinger-Dyson}
For the convenience of the reader we summarize the relevant results of the graph expansion in \cite{bosons}. For details and proofs we refer to \cite{bosons}.

Let $a^{(p)} \in \cal L(\mathcal{H}^{(p)}_-)$. From now on, we restrict all operators on $\cal F$ to $\cal H^{(N)}_-$; in particular, we understand expressions of the form $\widehat{\A}_N(a^{(p)})$ to mean $\widehat{\A}_N(a^{(p)}) \bigr|_{\cal H^{(N)}_-}$. We start with the Schwinger-Dyson series for the time-evolved operator $\ee^{\ii t H_N} \widehat{\A}_N(a^{(p)}) \ee^{-\ii t H_N}$. We write
\begin{equation*}
H_N \;=\; N\qbb{\widehat{\A}_N(h) + \frac{1}{2} \widehat{\A}_N(W)}\,,
\end{equation*}
and regard the second term as a perturbation. Thus we get the series expansion
\begin{equation} \label{multiple commutator expansion}
\ee^{\ii t H_N} \, \widehat{\A}_N(a^{(p)}) \, \ee^{-\ii t H_N} \;=\; \sum_{k = 0}^\infty \int_{\Delta^k(t)} \dd \ul{t} \; \frac{(\ii N)^k}{2^k} \qB{\widehat{\A}_N(W_{t_k}), \dots \qB{\widehat{\A}_N(W_{t_1}), \widehat{\A}_N(a^{(p)}_t)}\dots}\,,
\end{equation}
where $\ul{t} = (t_1, \dots, t_k)$ and $\Delta^k(t)$ is the $k$-simplex $\{(t_1, \dots, t_k) : 0 < t_k < \dots < t_1 < t\}$. Here, as before, a time subscript refers to free time evolution:
\begin{equation*}
a^{(p)}_t \;\deq\; \ee^{\ii \sum_{i} h_i t} \, a^{(p)} \, \ee^{-\ii \sum_i h_i t}\,.
\end{equation*}
In \cite{bosons} it was shown how the normal ordering of the multiple commutators in \eqref{multiple commutator expansion} gives rise to terms that can be classified graphically. The graph expansion reads
\begin{equation} \label{loop expansion}
\ee^{\ii t H_N} \, \widehat{\A}_N (a^{(p)}) \, \ee^{- \ii t H_N} \;=\; \sum_{k = 0}^\infty \sum_{l = 0}^k \frac{1}{N^l} \, \widehat{\A}_N \pb{G^{(k,l)}_t(a^{(p)})}\,.
\end{equation}
The $(p+k-l)$-particle operator $G^{(k,l)}_t(a^{(p)})$ corresponds to the sum of all $l$-loop graphs contributing to the multiple commutator of order $k$. Explicitly, it is given by
\begin{equation}
G^{(k,l)}_{t}(a^{(p)}) \;\deq\; \int_{\Delta^k(t)} \dd \ul{t} \; G^{(k,l)}_{t,\ul{t}}(a^{(p)})\,,
\end{equation}
where the operators $G^{(k,l)}_{t, \ul{t}}(a^{(p)})$ are recursively defined by
\begin{align}
G^{(k,l)}_{t,t_1,\dots,t_k}(a^{(p)}) 
\;=\; \;&\ii P_- \sum_{i = 1}^{p+k-l-1} \qB{W_{i \, p+k-l, t_k}, G^{(k-1,l)}_{t, t_1, \dots, t_{k - 1}}(a^{(p)}) \otimes \umat} P_- 
\notag \\ \label{recursive definition of graphs}
&{}+{} \ii P_-  \sum_{1 \leq i < j \leq p+k-l} \qB{W_{ij, t_k}, G^{(k-1, l-1)}_{t, t_1, \dots, t_{k - 1}}(a^{(p)})} P_- \,,
\end{align}
as well as $G^{(0,0)}_t(a^{(p)}) \deq a^{(p)}_t$. If $l < 0$ or $l > k$ then $G^{(k,l)}_{t,t_1,\dots,t_k}(a^{(p)}) = 0$.

As a sum over graphs\footnote{called graph structures in \cite{bosons}}, $G^{(k,l)}_{t}(a^{(p)})$ reads
\begin{equation} \label{graph expansion}
G^{(k, l)}_t(a^{(p)}) \;=\; \frac{\ii^k}{2^k} \sum_{\mathcal{Q} \in \mathscr{Q}(p,k,l)} i_{\mathcal{Q}} \int_{\Delta^k_{\mathcal{Q}}(t)} \dd \ul{t} \; G^{(k, l)(\mathcal{Q})}_{t,t_1,\dots, t_k}(a^{(p)})\,,
\end{equation}
where $i_{\mathcal{Q}} \in \{0,1\}$, $\Delta^k_{\mathcal{Q}}(t) \subset [0,t]^k$, and
$\mathscr{Q}(p,k,l)$ is a set of graphs whose cardinality satisfies
\begin{equation} \label{bound on number of graphs}
\abs{\mathscr{Q}(p,k,l)} \;\leq\; 2^k \binom{k}{l} \, \binom{2p + 3k}{k} \, (p+k-l)^l\,.
\end{equation}
The operator $G^{(k, l)(\mathcal{Q})}_{t,t_1,\dots, t_k}(a^{(p)})$ is an \emph{elementary term}, indexed by the graph $\mathcal{Q}$, of the form
\begin{equation} \label{elementary term}
P_- \, W_{i_1 j_1, t_{v_1}} \cdots W_{i_r j_r, t_{v_r}} \,\pb{a_t^{(p)} \otimes \umat^{(k-l)}} \, W_{i_{r+1} j_{r+1}, t_{v_{r+1}}} \cdots W_{i_k j_k, t_{v_k}} P_- \,,
\end{equation}
where $r = 0, \dots, k$ and $(t_{v_1}, \dots, t_{v_k})$ is some permutation of $(t_1, \dots, t_k)$.

The operator norm of $G^{(k,l)}_{t}(a^{(p)})$ may now be bounded using the dispersive estimate
\begin{equation} \label{general Kato smoothing}
\int \dd t \; \normb{\abs{x}^{-1} \ee^{\ii t \Delta} \psi}^2 \;\leq\; \pi \norm{\psi}^2\,.
\end{equation}
Going to centre of mass coordinates and using Cauchy-Schwarz, one sees that \eqref{general Kato smoothing} implies
\begin{equation} \label{Kato smoothing}
\int_0^t \dd s \; \normb{W_{i j, s} \, \varphi} \;\leq\; \sqrt{\frac{\pi \kappa^2 t}{2}} \, \norm{\varphi}\,.
\end{equation}
Recalling the estimate \eqref{bound on number of graphs}, it is now easy to argue, as in \cite{bosons}, that \eqref{loop expansion} converges uniformly in $N$ for small times $t$. Moreover, the large-$N$ asymptotics of the Schwinger-Dyson series \eqref{loop expansion} is given by the tree terms: for small times $t$ we have
\begin{equation} \label{one loop expansion}
\ee^{\ii t H_N} \, \widehat{\A}_N (a^{(p)}) \, \ee^{- \ii t H_N} \;=\; \sum_{k = 0}^\infty \widehat{\A}_N \pb{G^{(k,0)}_t(a^{(p)})} + L_N(t)\,,
\end{equation}
where $L_N(t)$, corresponding to the sum of all terms with at least one loop ($l \geq 1$), satisfies the estimate 
\begin{equation} \label{estimate on loop terms}
\norm{L_N(t)} \;\leq\; C(p,\kappa, t) \norm{a^{(p)}} N^{-1}\,,
\end{equation}
for small times $t$.

\subsection{Convergence of the Hartree-Fock time evolution to the tree terms}
We now give the main argument of our proof. We show that the Hartree-Fock time evolution is asymptotically ($N \to \infty$) given by the tree terms (i.e.\ the terms $l = 0$) of the Schwinger-Dyson series \eqref{loop expansion}. This result is summarized in Lemma \ref{lemma HF and tree terms} below.

The main idea of the proof is to iterate the integral Hartree-Fock equation so as to obtain a Schwinger-Dyson-type series. If iterated, the Hartree-Fock evolution of the observable $\widehat{\A}_N(a^{(p)})$ yields a power series expansion that \emph{differs} from the tree expansion
\begin{equation*}
\sum_{k = 0}^\infty \widehat{\A}_N \pb{G^{(k,0)}_t(a^{(p)})}\,.
\end{equation*}
Thus, at each step of the iteration we extract an error term, and continue the iteration on what remains. This is done in such a way that the resulting power series is equal to the tree expansion. The main work is to estimate the error terms arising at each step of the iteration. This is done by using a tree expansion combined with the dispersive estimate \eqref{Kato smoothing}.

We work with the density matrix formulation \eqref{integral Hartree-Fock for density matrices} of the Hartree-Fock equation. We seek an expansion for the quantity
\begin{equation} \label{quantity to be expanded}
\scalarb{S(\Phi^{(N)}(t))}{\widehat{\A}_N(a^{(p)}) S(\Phi^{(N)}(t))} \;=\; \tr \pb{a^{(p)} \gamma_N^{(p)}(t)}\,;
\end{equation}
see \eqref{expectation in Slater determinant}. In this subsection, the special form \eqref{one-particle density matrix of a Slater determinant} of $\gamma_N$ is unimportant. We therefore assume that we have an arbitrary orthogonal sequence $\Phi = (\varphi_i)_{i \in \N} \in \mathcal{K}$, and denote by $\Phi(t)$ the solution of the Hartree-Fock equation \eqref{integral Hartree-Fock} with initial data $\Phi$. Let $\gamma(t) \deq \gamma_{\Phi(t)}$ be the associated one-particle density matrix. 

By choosing $A = \A(\tilde{a}^{(p)})$, $a^{(p)} \in \cal L(\mathcal{H}^{(p)}_-)$, in Lemma \ref{classical Schwinger-Dyson expansion} and mimicking the proof of Lemma \ref{Hartree-Fock solves density matrix Hartree-Fock} one finds that
\begin{equation} \label{Hartree-Fock on tensor power, step 1}
\tr \pb{a^{(p)} \, \gamma(t)^{\otimes p}} \;=\; \tr \pb{a^{(p)}_t \gamma^{\otimes p}} - \ii \int_0^t \dd s \sum_{i = 1}^p \tr \pB{a^{(p)}_{t-s} \tr_{p+1} \qb{\mathcal{W}_{i \; p+1}, \gamma(s)^{\otimes (p+1)}}}\,.
\end{equation}
It is convenient to use the representation $\tilde{\gamma}(t)$ defined in \eqref{interaction picture density matrix}. Using the substitution $a^{(p)} \mapsto a^{(p)}_{-t}$ in \eqref{Hartree-Fock on tensor power, step 1}, we get
%\begin{equation*}
%\tr \pb{a^{(p)} \, \tilde{\gamma}(t)^{\otimes p}} \;=\; \tr \pb{a^{(p)} \gamma^{\otimes p}} - \ii \int_0^t \dd s \sum_{i = 1}^p \tr \pB{a^{(p)} \tr_{p+1} \qb{\mathcal{W}_{i \, p+1, s}, \tilde{\gamma}(s)^{\otimes (p+1)}}}\,,
%\end{equation*}
\begin{equation*}
\tr \pb{a^{(p)} \, \tilde{\gamma}(t)^{\otimes p}} \;=\; \tr \pb{a^{(p)} \gamma^{\otimes p}} - \ii \int_0^t \dd s \sum_{i = 1}^p \tr \pB{a^{(p)} \tr_{p+1} \qb{\mathcal{W}_{i \; p+1, s}, \tilde{\gamma}(s)^{\otimes (p+1)}}}\,,
\end{equation*}
where, we recall,
\begin{equation*}
\tilde{\gamma}(t) \;\deq\; \ee^{\ii th} \,\gamma(t) \,\ee^{-\ii th}\,.
\end{equation*}
Recall that $\mathcal{W}_{ij} = W_{ij} (\umat - E_{ij})$. Also, $E_{ij}$ commutes with $W_{ij}$ and with $\tilde{\gamma}(s)^{\otimes (p+1)}$. Thus $\Sigma^{(p)}_- a^{(p)} = p! \, a^{(p)}$, together with the fact that $a^{(p)}$ was arbitrary, implies the integral equation
\begin{equation} \label{Hartree-Fock on tensor power}
\tilde{\gamma}(t)^{\otimes p} \, \Sigma_-^{(p)} \;=\; \gamma^{\otimes p} \, \Sigma_-^{(p)} - \ii \int_0^t \dd s \; \tr_{p+1} \q{\sum_{i = 1}^p W_{i \; p+1, s}, \tilde{\gamma}(s)^{\otimes (p+1)} (\umat - E_{i \; p+1})} \Sigma_-^{(p)}.
\end{equation}

The tree expansion would be obtained by iterating the somewhat different integral equation
\begin{equation} \label{Duhamel formula giving tree expansion}
\tilde{\gamma}(t)^{\otimes p} \, \Sigma_-^{(p)} \;=\; \gamma^{\otimes p} \, \Sigma_-^{(p)} - \ii \int_0^t \dd s \; \tr_{p+1} \q{\sum_{i = 1}^p W_{i \; p+1, s}, \tilde{\gamma}(s)^{\otimes (p+1)} \, \Sigma_-^{(p+1)}}.
\end{equation}
In order to compare \eqref{Hartree-Fock on tensor power} with \eqref{Duhamel formula giving tree expansion}, we use the elementary identity
\begin{equation*}
\Sigma^{(p+1)}_- \;=\; \pbb{\umat - \sum_{j = 1}^p E_{j \; p+1}}\Sigma_-^{(p)}\,.
\end{equation*}
Thus we find
\begin{align*}
\tr_{p+1} \q{\sum_{i = 1}^p W_{i \; p+1, s}, \tilde{\gamma}(s)^{\otimes (p+1)} \, \Sigma_-^{(p+1)}}
&\;=\; \tr_{p+1} \q{\sum_{i = 1}^p W_{i \; p+1, s}, \tilde{\gamma}(s)^{\otimes (p+1)} \pbb{\umat - \sum_{j = 1}^p E_{j \; p+1}}\Sigma_-^{(p)}}
\\
&\;=\; \tr_{p+1} \q{\sum_{i = 1}^p W_{i \; p+1, s}, \tilde{\gamma}(s)^{\otimes (p+1)} \pbb{\umat - \sum_{j = 1}^p E_{j \; p+1}}} \Sigma_-^{(p)}\,.
\end{align*}
Together with \eqref{Hartree-Fock on tensor power} this yields
\begin{equation} \label{HF evolution of a product state}
\tilde{\gamma}(t)^{\otimes p} \, \Sigma_-^{(p)} \;=\; \gamma^{\otimes p} \, \Sigma_-^{(p)} - \ii \int_0^t \dd s \; \tr_{p+1} \q{\sum_{i = 1}^p W_{i \; p+1, s}, \tilde{\gamma}(s)^{\otimes (p+1)} \, \Sigma_-^{(p+1)}}
+ R_p(t)\,,
\end{equation}
with an error term
\begin{equation} \label{definition of R}
R_p(t) \;\deq\;  - \ii \sum_{1 \leq i \neq j \leq p} \int_0^t \dd s \; \tr_{p+1} \q{W_{i \; p+1, s}, \tilde{\gamma}(s)^{\otimes (p+1)} E_{j \; p+1}} \Sigma_-^{(p)}.
\end{equation}
The partial trace is most conveniently computed using operator kernels. We find
\begin{multline*}
\p{W_{i \; p+1, s} \tilde{\gamma}(s)^{\otimes (p+1)} E_{j \; p+1}} (x_1, \dots, x_{p+1}; y_1, \dots, y_{p+1})
\\
=\; \int \dd z_1 \, \dd z_2 \; \q{\prod_{r \neq i,j} \tilde{\gamma}(s)(x_r; y_r)}
\, W_s(x_i, x_{p+1}; z_1, z_2) \, \tilde{\gamma}(s)(z_1; y_i) \, \tilde{\gamma}(s)(x_j; y_{p+1}) \, \tilde{\gamma}(s)(z_2; y_j)\,,
\end{multline*}
so that
\begin{align}
&\qquad \tr_{p+1} \p{W_{i \; p+1, s} \tilde{\gamma}(s)^{\otimes (p+1)} E_{j \; p+1}} (x_1, \dots, x_p; y_1, \dots, y_p)
\notag \\
&=\; \int \dd z_1 \, \dd z_2 \, \dd z_3 \; \q{\prod_{r \neq i,j} \tilde{\gamma}(s)(x_r; y_r)}
\, W_s(x_i, z_3; z_1, z_2) \, \tilde{\gamma}(s)(z_1; y_i) \, \tilde{\gamma}(s)(x_j; z_3) \, \tilde{\gamma}(s)(z_2; y_j)
\notag \\ \label{partial trace computation}
&=\; \p{\tilde{\gamma}_j(s) W_{i j,s} \tilde{\gamma}(s)^{\otimes p}} (x_1, \dots, x_p; y_1, \dots, y_p)\,.
\end{align}
The second term of the commutator in \eqref{definition of R} is the adjoint of the first and we get 
\begin{align}
R_p(t) &\;=\;  - \ii \sum_{1 \leq i \neq j \leq p} \int_0^t \dd s \; \pB{\tilde{\gamma}_j(s) W_{i j,s} \tilde{\gamma}(s)^{\otimes p} - \tilde{\gamma}(s)^{\otimes p} W_{i j,s} \tilde{\gamma}_j(s)} \Sigma^{(p)}_-
\notag \\ \notag
&\;=\;  - \ii \sum_{1 \leq i \neq j \leq p} \int_0^t \dd s \; \pB{\tilde{\gamma}_j(s) W_{i j,s} \tilde{\gamma}(s)^{\otimes p} \Sigma^{(p)}_- - \tilde{\gamma}(s)^{\otimes p} \Sigma^{(p)}_- W_{i j,s} \tilde{\gamma}_j(s)}\,.
\end{align}

We remark at this point that the formula \eqref{partial trace computation} is the key identity behind our proof. From it our strategy is already apparent: the factor $\tilde \gamma_j(s)$ multiplies a trace class operator, so that the trace norm of the right-hand side of \eqref{partial trace computation} may eventually (after the smoothing effect of the free time evolution has been exploited) be estimated using the \emph{operator norm} of $\tilde \gamma_j(s)$, which is bounded by $N^{-1}$. Moreover, it is also apparent why it is crucial to keep the exchange term in \eqref{Hartree-Fock on tensor power} for our argument to work. Indeed, if the exchange term in \eqref{Hartree-Fock on tensor power} were dropped, the restriction $i \neq j$ in \eqref{definition of R} would no longer hold. In that case we would have to estimate the term
\begin{equation*}
\tr_{p+1} \q{W_{i \; p+1, s}, \tilde{\gamma}(s)^{\otimes (p+1)} E_{i \; p+1}}\,,
\end{equation*}
for which the computation of \eqref{partial trace computation} does not hold.

We are now ready to iterate \eqref{HF evolution of a product state}. Multiplying \eqref{HF evolution of a product state} by $a^{(p)} \in \cal L(\mathcal{H}^{(p)}_-)$ yields
\begin{multline*}
%\mspace{-50mu} \tr \p{a^{(p)} \, \gamma(t)^{\otimes p} \, \Sigma_-^{(p)}} 
%\\
\tr \p{a_t^{(p)} \, \tilde{\gamma}(t)^{\otimes p} \, \Sigma^{(p)}_-}
\;=\; \tr \p{a_t^{(p)} \, \gamma^{\otimes p} \, \Sigma^{(p)}_-} 
\\
{}+{} \ii \int_0^t \dd s \, \sum_{i = 1}^p \tr \p{\q{W_{i \; p+1,s}, a^{(p)}_t \otimes \umat} \, \tilde{\gamma}(s)^{\otimes (p+1)} \, \Sigma^{(p+1)}_-} + \tr \p{a^{(p)}_t \, R_p(t)}\,.
\end{multline*}
Iterating this $K$ times yields our main expansion
\begin{align}
\tr \p{a_t^{(p)} \, \tilde{\gamma}(t)^{\otimes p} \, \Sigma_-^{(p)}} &\;=\; \sum_{k = 0}^{K-1} \int_{\Delta^k(t)} \dd \ul{t} \; \tr \p{G_{t, \ul{t}}^{(k,0)}(a^{(p)}) \, \gamma^{\otimes (p+k)} \Sigma_-^{(p+k)}}
\notag \\
&\qquad {}+{} \int_{\Delta^K(t)} \dd \ul{t} \; \tr \p{G_{t, \ul{t}}^{(K,0)}(a^{(p)}) \, \tilde{\gamma}(t_K)^{\otimes (p+K)} \Sigma_-^{(p+K)}}
\notag \\ \label{main expansion}
&\qquad {}+{} \sum_{k = 0}^{K-1} \sum_{1 \leq i \neq j \leq p+k} R^{k}_{ij}(t)\,,
\end{align}
where the error terms are given by
\begin{multline*}
R^{k}_{ij}(t) \;\deq\; - \ii \, \int_{\Delta^{k+1}(t)} \dd \ul{t} \; \tr \Bigl(G^{(k,0)}_{t, t_1, \dots, t_k}(a^{(p)}) \, \tilde{\gamma}_j(t_{k+1}) \, W_{ij, t_{k + 1}} \, \tilde{\gamma}(t_{k+1})^{\otimes (p+k)} \, \Sigma_-^{(p+k)} \\
- W_{ij, t_{k+1}} \, \tilde{\gamma}_j(t_{k+1}) \, G^{(k,0)}_{t, t_1, \dots, t_k}(a^{(p)}) \, \tilde{\gamma}(t_{k+1})^{\otimes (p+k)} \, \Sigma_-^{(p+k)}
\Bigr)\,.
\end{multline*}

Next, we estimate $R^{k}_{ij}(t)$. Let us concentrate on the first term, which we rewrite using the renaming $t_{k+1} \to s$ as
\begin{equation} \label{half of the error term}
\int_{\Delta^k(t)} \dd \ul{t} \int_0^{\wedge \ul{t}} \dd s \; \tr \p{G^{(k,0)}_{t, t_1, \dots, t_k}(a^{(p)}) \, \tilde{\gamma}_j(s) \, W_{ij, s} \, \tilde{\gamma}(s)^{\otimes (p+k)} \, \Sigma_-^{(p+k)}}\,,
\end{equation}
where $\wedge \ul{t} \deq \min\{t_1, \dots, t_k\}$. The outline of our strategy is as follows. We expand both $G^{(k,0)}_{t, t_1, \dots, t_k}(a^{(p)})$ and $\tilde{\gamma}(s)^{\otimes (p+k)}$ using a graph expansion. The integral over $s$ allows us to control the singularity in the potential $W_{ij,s}$ by invoking the dispersive estimate \eqref{Kato smoothing}. The term $\tilde{\gamma}_j(s)$ is bounded by its operator norm.

We start by deriving a tree expansion for $\tilde{\gamma}(s)^{\otimes p}$.
\begin{lemma} \label{lemma: tree expansion for gamma}
Let $a^{(p)} \in \cal L(\mathcal{H}_-^{(p)})$. For small times we have the tree expansion
\begin{equation} \label{tree expansion of p-particle observable}
\tr \pb{a^{(p)} \tilde{\gamma}(t)^{\otimes p} \Sigma_-^{(p)}} \;=\;
\sum_{k = 0}^\infty \int_{\Delta^k(t)} \dd \ul{t} \; \tr\pb{T^{(k)}_{\ul{t}}(a^{(p)}) \gamma^{\otimes (p+k)} \Sigma_-^{(p)}}\,,
\end{equation}
where $T^{(k)}_{\ul{t}}$ is the linear operator defined by $T^{(0)}(a^{(p)}) \deq a^{(p)}$ and
\begin{equation*}
T^{(k)}_{t_1 \dots t_k}(a^{(p)}) \;=\; \ii \sum_{i = 1}^{p+k-1} \q{\mathcal{W}_{i \, p + k, t_k}, T^{(k-1)}_{t_1 \dots t_{k-1}}(a^{(p)}) \otimes \umat}\,.
\end{equation*}
\end{lemma}
\begin{proof}
Lemma \ref{classical Schwinger-Dyson expansion} applied to $A = \A(\tilde{a}^{(p)})$ yields
\begin{equation*}
\tr \pb{a^{(p)} \tilde{\gamma}(t)^{\otimes p}} \;=\;
\sum_{k = 0}^\infty \int_{\Delta^k(t)} \dd \ul{t} \; \tr\pb{T^{(k)}_{\ul{t}}(a^{(p)}) \gamma^{\otimes (p+k)}}\,.
\end{equation*}
The claim then follows by noting that $\Sigma^{(p)}_-a^{(p)} = p! a^{(p)}$ and that $\sum_{i = 1}^{p+k-1} \mathcal{W}_{i\, p+k, t_k}$ commutes with $\Sigma_-^{(p)}$.
The proof of convergence of the series is the same as the proof of convergence of the series \eqref{loop expansion} outlined in Section \ref{subsection: Schwinger-Dyson}.
\end{proof}

We use Lemma \ref{lemma: tree expansion for gamma} to expand $\tilde{\gamma}(s)^{\otimes (p+k)}$ in \eqref{half of the error term}. The result is that \eqref{half of the error term} equals
\begin{equation*}
\sum_{k' = 0}^\infty \int_{\Delta^k(t)} \dd \ul{t} \int_0^{\wedge \ul{t}} \dd s \int_{\Delta^{k'}(s)} \dd \ul{t}' \;
\tr\hB{T_{\ul{t}'}^{(k')} \pB{G_{t,\ul{t}}^{(k,0)}(a^{(p)}) \tilde{\gamma}_j(s) W_{ij, s}} \gamma^{\otimes (p+k+k')} \Sigma_-^{(p+k)}}\,.
\end{equation*}

Next, we recall from \eqref{graph expansion} that $G^{(k,0)}_{t, t_1, \dots, t_k}(a^{(p)})$ can be written as a sum over tree graphs $\mathcal{Q} \in \mathscr{Q}(p,k,0)$ of elementary terms of the form \eqref{elementary term}. Also, since the definition of $T^{(k)}_{t_1, \dots, t_k}(a^{(p)})$ is the same as the definition of $G^{(k,0)}_{0, t_1, \dots, t_k}(a^{(p)})$ with $W$ replaced by $\mathcal{W}$, we immediately get that $T^{(k)}_{t_1, \dots, t_k}(a^{(p)})$ is equal to a sum over tree graphs $\mathcal{Q} \in \mathscr{Q}(p,k,0)$ of elementary terms of the form
\begin{equation*}
P_- \, \mathcal{W}_{i_1 j_1, t_{v_1}} \cdots \mathcal{W}_{i_r j_r, t_{v_r}} \,\pb{a_t^{(p)} \otimes \umat^{(k-l)}} \, \mathcal{W}_{i_{r+1} j_{r+1}, t_{v_{r+1}}} \cdots \mathcal{W}_{i_k j_k, t_{v_k}} P_- \,.
\end{equation*}
In particular, using the dispersive estimate \eqref{Kato smoothing} and the bound \eqref{bound on number of graphs}, one readily sees that the tree expansion \eqref{tree expansion of p-particle observable} converges for small times.

Applying the tree expansion to both $G^{(k,0)}_{t, t_1, \dots, t_k}(a^{(p)})$ and $\tilde{\gamma}(s)^{\otimes (p+k)}$ in \eqref{half of the error term}, we see that \eqref{half of the error term} is equal to 
\begin{multline} \label{fully expanded error term}
\sum_{k' = 0}^\infty \frac{\ii^{k+k'}}{2^{k+k'}}\sum_{\mathcal{Q} \in \mathscr{Q}(p,k,0)} \sum_{\mathcal{Q}' \in \mathscr{Q}(p+k, k',0)} i_{\mathcal{Q}} i_{\mathcal{Q}'} \int_{\Delta^k_{\mathcal{Q}}(t)} \dd \ul{t} \int_0^{\wedge \ul{t}} \dd s \int_{\Delta^{k'}_{\mathcal{Q}'}(s)} \dd \ul{t}'
\\
\tr\hB{A \, a^{(p)}_{1\dots p,t} \, B P_-^{(p+k)} 
\tilde{\gamma}_j(s) W_{ij, s} \, C \, \gamma^{\otimes (p+k+k')} \Sigma_-^{(p+k)}}\,,
\end{multline}
where $A$, $B$, and $C$ are operators with the following properties:
\begin{enumerate}
\item $A$, $B$, and $C$ depend on the variables $(\mathcal{Q}, \mathcal{Q}', k, k', \ul{t}, \ul{t}')$;
\item
$A$, $B$, and $C$ are each a product of operators of the form $W_{i' j', r}$, or $\mathcal{W}_{i' j', r}$, where $r$ stands for a time variable in $\{t_1, \dots, t_k, t_1', \dots, t_{k'}'\}$;
\item
the product $ABC$ contains $k$ factors $W$ and $k'$ factors $\mathcal{W}$;
\item
each time variable in $t_1, \dots, t_k, t_1, \dots, t_{k'}$ appears exactly once in the product $ABC$.
\end{enumerate}

Next, we estimate the operator norm of the operator multiplying $\gamma^{\otimes (p+k+k')}$ in \eqref{fully expanded error term}.
Let $\varphi \in \mathcal{H}^{\otimes (p+k+k')}$ and estimate
\begin{multline*}
I \;\deq\; \normbb{\sum_{\mathcal{Q} \in \mathscr{Q}(p,k,0)} \sum_{\mathcal{Q}' \in \mathscr{Q}(p+k, k',0)} i_{\mathcal{Q}} i_{\mathcal{Q}'} \int_{\Delta^k_{\mathcal{Q}}(t)} \dd \ul{t} \int_0^{\wedge \ul{t}} \dd s \int_{\Delta^{k'}_{\mathcal{Q}'}(s)} \dd \ul{t}' \;
A \, a^{(p)}_{1\dots p,t} \, B P_-^{(p+k)} 
\tilde{\gamma}_j(s) W_{ij, s} \, C \, \varphi}
\\
\leq\; 
\sum_{\mathcal{Q} \in \mathscr{Q}(p,k,0)} \sum_{\mathcal{Q}' \in \mathscr{Q}(p+k, k',0)} \int_{[0,t]^k} \dd \ul{t} \int_0^t \dd s \int_{[0,t]^{k'}} \dd \ul{t}' \;
\normB{A \, a^{(p)}_{1\dots p, t} \, B P_-^{(p+k)} 
\tilde{\gamma}_j(s) W_{ij, s} \, C \, \varphi}
\end{multline*}
We now perform all time integrations, starting from the left, and using at each step the estimate \eqref{Kato smoothing} as well as
\begin{equation*}
\int_0^t \dd r \; \normb{\mathcal{W}_{i' j', r} \varphi} \;\leq\; \sqrt{2 \pi \kappa^2 t} \, \norm{\varphi}\,,
\end{equation*}
which follows trivially from \eqref{Kato smoothing}.
Also, Lemma \ref{lemma: Zagatti} implies that $\norm{\tilde{\gamma}(s)} = \norm{\gamma}$. Thus we find that  
\begin{equation*}
I \;\leq\;
\sum_{\mathcal{Q} \in \mathscr{Q}(p,k,0)} \sum_{\mathcal{Q}' \in \mathscr{Q}(p+k, k',0)} \pbb{\frac{\pi \kappa^2 t}{2}}^{(k+1)/2} \p{2\pi \kappa^2 t}^{k'/2}
\norm{a^{(p)}} \norm{\gamma} \norm{\varphi}
\end{equation*}
Using the bound
\begin{equation*}
\abs{\mathscr{Q}(p,k,0)} \;\leq\; 4^p\,16^k\,,
\end{equation*}
which can be inferred from \eqref{bound on number of graphs}, we find
\begin{align*}
I &\;\leq\; 4^p \, 16^k \, 4^{p+k} 16^{k'} \pbb{\frac{\pi \kappa^2 t}{2}}^{(k+1)/2} \p{2\pi \kappa^2 t}^{k'/2}
\norm{a^{(p)}} \norm{\gamma} \norm{\varphi}
\\
&\;\leq\;
16^p \sqrt{2 \pi \kappa^2 t} \, \pb{32 \sqrt{2 \pi \kappa^2 t}}^k \, \pb{16 \sqrt{2 \pi \kappa^2 t}}^{k'} \, \norm{a^{(p)}} \norm{\gamma} \norm{\varphi}\,.
\end{align*}
Let $t < (2^{11} \pi \kappa^2)^{-1}$. Now Lemma \ref{lemma: bound on gamma_p} implies that $\norm{\gamma^{\otimes (p+k+k')} \Sigma_-^{(p+k)}}_1 \leq 1$. Using the inequality $\tr(A \Gamma) \leq \norm{A} \norm{\Gamma}_1$ we therefore find that \eqref{half of the error term} is bounded by
\begin{equation*}
16^p \sum_{k' = 0}^\infty \pb{32 \sqrt{2 \pi \kappa^2 t}}^k \, \pb{16 \sqrt{2 \pi \kappa^2 t}}^{k'} \, \norm{a^{(p)}} \norm{\gamma} 
\;=\; 16^p \frac{\pb{32 \sqrt{2 \pi \kappa^2 t}}^k}{1 - 16 \sqrt{2 \pi \kappa^2 t}} \, \norm{a^{(p)}} \norm{\gamma}\,.
\end{equation*}

The second term of $R_{ij}^k(t)$ is equal to the complex conjugate of the first. We thus arrive at the desired bound
\begin{equation} \label{bound on R}
\abs{R_{ij}^k(t)} \;\leq\; 2 \cdot 16^p \frac{\pb{32 \sqrt{2 \pi \kappa^2 t}}^k}{1 - 16 \sqrt{2 \pi \kappa^2 t}} \, \norm{a^{(p)}} \norm{\gamma}\,.
\end{equation}
Therefore the last line of \eqref{main expansion} is bounded by
\begin{multline*}
2 \cdot 16^p \frac{1}{1 - 16 \sqrt{2 \pi \kappa^2 t}} \, \norm{a^{(p)}} \norm{\gamma} \sum_{k = 0}^\infty (p+k)^2 \pb{32 \sqrt{2 \pi \kappa^2 t}}^k
\\
\leq\;
4 \cdot 16^p \, \ee^p \frac{1}{1 - 16 \sqrt{2 \pi \kappa^2 t}} \frac{1}{\pb{1 - 32 \sqrt{2 \pi \kappa^2 t}}^3} \, \norm{a^{(p)}} \norm{\gamma}\,,
\end{multline*}
where we used the estimate $\sum_{k = 0}^\infty (p+k)^L \, x^k \leq \frac{\ee^p \, L!}{(1 - x)^{L+1}}$.

Next, we note that the second line of \eqref{main expansion}, i.e.\ the rest term, vanishes in the limit $K \to \infty$. The procedure is almost identical to (in fact easier than) the above estimation of $\abs{R_{ij}^k(t)}$. The result is
\begin{equation*}
\absbb{\int_{\Delta^K(t)} \dd \ul{t} \; \tr \p{G_{t, \ul{t}}^{(K,0)}(a^{(p)}) \, \tilde{\gamma}(t_K)^{\otimes (p+K)} \Sigma_-^{(p+K)}}}
\;\leq\; 2 \cdot 16^p \frac{\pb{32 \sqrt{2 \pi \kappa^2 t}}^K}{1 - 16 \sqrt{2 \pi \kappa^2 t}} \, \norm{a^{(p)}} \;\longrightarrow\; 0\,,
\end{equation*}
as $K \to \infty$.

We summarize what we have proven: For small times the Hartree-Fock time evolution is equal to the tree expansion plus an error term of order $\norm{\gamma}$.
\begin{lemma} \label{lemma HF and tree terms}
Let $a^{(p)} \in \cal L(\mathcal{H}_-^{(p)})$. Then, for small times, we have
\begin{equation*}
\absbb{\tr \pB{a^{(p)} \, \gamma(t)^{\otimes p} \, \Sigma_-^{(p)}} - \sum_{k = 0}^\infty \tr \pB{G_{t}^{(k,0)}(a^{(p)}) \, \gamma^{\otimes (p+k)} \Sigma_-^{(p+k)}}} \;\leq\; \norm{a^{(p)}} \, \norm{\gamma} \, C(p, \kappa,t)\,,
\end{equation*}
for some constant $C(p, \kappa, t)$.
\end{lemma}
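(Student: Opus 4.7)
The plan is to derive a Duhamel-type expansion for $\tr\pb{a_t^{(p)} \tilde\gamma(t)^{\otimes p} \Sigma_-^{(p)}}$ that, up to a controllable remainder, reproduces the tree-level contributions of the microscopic Schwinger-Dyson series recalled in \eqref{loop expansion}--\eqref{graph expansion}. First, I would apply Lemma \ref{classical Schwinger-Dyson expansion} to $A = \A(\tilde a^{(p)})$ and mimic the computation in the proof of Lemma \ref{Hartree-Fock solves density matrix Hartree-Fock} to obtain a one-step expansion for $\tr\pb{a^{(p)} \gamma(t)^{\otimes p}}$, in which a $(p+1)$-particle interaction $\sum_{i \leq p} \mathcal{W}_{i\,p+1}$ is contracted against $\gamma(s)^{\otimes(p+1)}$. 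The crucial algebraic point is that $\mathcal{W} = W(\umat - E)$ while $\Sigma_-^{(p+1)} = \pb{\umat - \sum_{j \leq p} E_{j\,p+1}} \Sigma_-^{(p)}$, so that precisely one of the exchange contributions matches the $W$-vertex of the microscopic tree structure, while the remaining $p-1$ mismatched pairs are collected into an error $R_p(t)$, which a partial-trace computation in operator kernels expresses as $\tilde\gamma_j(s) W_{ij,s} \tilde\gamma(s)^{\otimes p}$ minus its adjoint.

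Second, I would iterate this one-step identity $K$ times to produce the main expansion: a truncated sum $\sum_{k=0}^{K-1} \tr\pb{G^{(k,0)}_{t,\ul t}(a^{(p)}) \gamma^{\otimes(p+k)} \Sigma_-^{(p+k)}}$ of tree contractions matching \eqref{graph expansion} exactly, a $K$-th order rest term integrated against $\tilde\gamma(t_K)^{\otimes(p+K)}$, and a cumulative error $\sum_{k,i\neq j} R^k_{ij}$ coming from the exchange mismatches at each iteration.

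The main obstacle is the bound on $R^k_{ij}$. My idea is to further tree-expand the factors of $\tilde\gamma(s)$ remaining inside $R^k_{ij}$, using a clean analogue of Lemma \ref{classical Schwinger-Dyson expansion} in which $\mathcal{W}$ plays the role of the interaction and tree graphs in $\mathscr{Q}(p+k,k',0)$ index the new terms. This rewrites $R^k_{ij}$ as a double sum over $\mathcal{Q} \in \mathscr{Q}(p,k,0)$ and $\mathcal{Q}' \in \mathscr{Q}(p+k,k',0)$, yielding an operator product carrying $k$ insertions of $W_{\cdot\cdot,\cdot}$, $k'$ insertions of $\mathcal{W}_{\cdot\cdot,\cdot}$, and a single intermediate factor $\tilde\gamma_j(s) W_{ij,s}$. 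Each time integration is then controlled by the Kato smoothing bound \eqref{Kato smoothing} (or its trivial $\mathcal{W}$-variant), the combinatorial explosion by $\abs{\mathscr{Q}(p,k,0)} \leq 4^p \cdot 32^k$ from \eqref{bound on number of graphs}, and the trace is handled via $\tr(A\Gamma) \leq \norm{A} \norm{\Gamma}_1$ together with Lemma \ref{lemma: bound on gamma_p}, which supplies $\norm{\gamma^{\otimes(p+k+k')} \Sigma_-^{(p+k)}}_1 \leq 1$ using $\tr \gamma \leq 1$.

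Finally, choosing $t$ small enough that the geometric series in $k$ and $k'$ both converge --- the threshold being roughly $t < (2^{11}\pi\kappa^2)^{-1}$ --- I would sum the contributions of the $R^k_{ij}$ over $k$ and over $i\neq j$ using the elementary estimate $\sum_{k \geq 0} (p+k)^L x^k \leq \me^p L!/(1-x)^{L+1}$, and verify that the $K$-th order rest term vanishes as $K \to \infty$ by the same (and simpler, since no secondary tree expansion is required) mechanism. This yields an explicit constant $C(p,\kappa,t)$ and completes the argument. The exchange term in $\mathcal{W}$ is the essential source of technical difficulty: it is precisely what forces the double graph expansion, whereas for the Hartree equation (no exchange) the mismatch $R^k_{ij}$ would be absent.
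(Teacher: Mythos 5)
Your proposal follows the paper's own proof essentially step for step: the one-step Duhamel identity via Lemma \ref{classical Schwinger-Dyson expansion}, the matching of one exchange contribution of $\mathcal{W}$ against the factor $\Sigma_-^{(p+1)} = \pb{\umat - \sum_j E_{j\,p+1}}\Sigma_-^{(p)}$ with the mismatched pairs collected into $R_p(t)$, the kernel computation giving $\tilde\gamma_j(s) W_{ij,s}\tilde\gamma(s)^{\otimes p}$ minus its adjoint, the $K$-fold iteration, the secondary tree expansion of $\tilde\gamma(s)$ inside $R^k_{ij}$ controlled by Kato smoothing, the graph-counting bound, and Lemma \ref{lemma: bound on gamma_p}, down to the same smallness threshold for $t$ and the same summation estimate. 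This is the paper's argument; no gaps.
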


\subsection{Conclusion of the proof}
We now have all the necessary ingredients to prove our main result. Take an infinite sequence 
$\Phi = (\varphi_i)_{i \in \N}$ of orthonormal orbitals, and denote by $\Phi^{(N)}(t)$ the 
solution of the rescaled integral Hartree-Fock equation \eqref{rescaled integral Hartree-Fock} 
with initial data $\Phi^{(N)}$.

Then, for small times, we find from \eqref{one loop expansion} and \eqref{quantity to be expanded} that
\begin{align*}
&\scalarB{\ee^{-\ii t H_N} S(\Phi^{(N)})}{\widehat{\A}_N(a^{(p)}) \, \ee^{-\ii t H_N} S(\Phi^{(N)})} - \scalarB{S\pb{\Phi^{(N)}(t)}}{\widehat{\A}_N(a^{(p)}) S\pb{\Phi^{(N)}(t)}}
\\
&\;=\; \sum_{k = 0}^\infty \scalarB{S(\Phi^{(N)})}{\widehat{\A}_N \pb{G^{(k,0)}_t(a^{(p)})} S(\Phi^{(N)})} + \scalarb{S(\Phi^{(N)})}{L_N(t) S(\Phi^{(N)})} - \tr \pb{a^{(p)} \gamma_N^{(p)}(t)}
\\
&\;=\; \sum_{k = 0}^\infty \tr \pB{G_{t}^{(k,0)}(a^{(p)}) \, \gamma_N^{(p+k)}} - \tr \pb{a^{(p)} \gamma_N^{(p)}(t)} + \scalarb{S(\Phi^{(N)})}{L_N(t) S(\Phi^{(N)})}\,.
\end{align*}
Thus, Lemma \ref{lemma HF and tree terms} and \eqref{estimate on loop terms} imply
\begin{multline} \label{main estimate for small times}
\absbb{\scalarB{\ee^{-\ii t H_N} S(\Phi^{(N)})}{\widehat{\A}_N(a^{(p)}) \, \ee^{-\ii t H_N} S(\Phi^{(N)})} - \scalarB{S\pb{\Phi^{(N)}(t)}}{\widehat{\A}_N(a^{(p)}) S\pb{\Phi^{(N)}(t)}}}
\\
\leq\; C(p, \kappa, t) \norm{a^{(p)}} \pb{\norm{\gamma_N} + N^{-1}} \;\leq\; \frac{2 C(p, \kappa, t)}{N} \norm{a^{(p)}}\,,
\end{multline}
by \eqref{operator norm of gamma}.

Next, we observe that the quantum-mechanical and the Hartree-Fock time-evolutions are norm-preserving. We may therefore iterate \eqref{main estimate for small times}, as in \cite{bosons}, to extend it to all times. Thus we find, for all times $t \in \R$,
\begin{multline} \label{main estimate for all times}
\absbb{\scalarB{\ee^{-\ii t H_N} S(\Phi^{(N)})}{\widehat{\A}_N(a^{(p)}) \, \ee^{-\ii t H_N} S(\Phi^{(N)})} - \scalarB{S\pb{\Phi^{(N)}(t)}}{\widehat{\A}_N(a^{(p)}) S\pb{\Phi^{(N)}(t)}}}
\\
\leq\; C(p, \kappa, t) \norm{a^{(p)}} \pb{\norm{\gamma_N} + N^{-1}} \;\leq\; C(p, \kappa, t) f(t,N) \norm{a^{(p)}}\,,
\end{multline}
where $f(t,N)$ is some function satisfying $\lim_N f(t,N) = 0$ for all $t \in \R$. Theorem \ref{theorem: result 1} follows immediately.

Finally, we prove Theorem \ref{theorem: result 2}. Let $\Gamma_N^{(p)}(t)$ and $\tilde{\Gamma}_N^{(p)}(t)$ be defined as in Section \ref{section: main result}. Thus, plugging \eqref{relation between marginals} and \eqref{expectation in Slater determinant} into \eqref{main estimate for all times} yields
\begin{equation*}
\frac{p!}{N^p} \binom{N}{p} \absbb{\tr \qB{\pB{\Gamma_N^{(p)}(t) - \tilde{\Gamma}_N^{(p)}(t)} a^{(p)}}} \;\leq\; C(p, \kappa, t) f(t,N) \norm{a^{(p)}}\,.
\end{equation*}
Thus, by the duality $(\cal L^1)^* = \cal L$, we find
\begin{equation*}
\normb{\Gamma_N^{(p)}(t) - \tilde{\Gamma}_N^{(p)}(t)}_1 \;\leq\; C(p,\kappa, t) f(t,N)\,.
\end{equation*}
Theorem \ref{theorem: result 2} follows.

\section{A Egorov-type result for small times} \label{section: Egorov}
In this section we describe how the many-body dynamics of fermions may be seen as the quantization of a classical ``superhamiltonian'' system, whose dynamics is approximately described by the Hartree-Fock equation.

\subsection{A graded algebra of observables}
We start by defining a Grassmann algebra of anticommuting variables over the one-particle space $\mathcal{H} = L^2(\R^3)$, and equip it with a suitable norm. Formally, we consider the infinite-dimensional Grassmann algebra generated by $\{\ol{\psi(x)}, \psi(x)\}_{x \in \R^3}$. As it turns out, this algebra can be made into a Banach algebra under a natural choice of norm. This norm is most conveniently formulated by identifying elements of the Grassmann algebra with bounded operators between $L^2$-spaces.

Let
\begin{equation}
a \;=\; (a^{(p,q)})_{p,q \in \N} \,, \qquad a^{(p,q)} \in \cal L(\mathcal{H}^{(q)}_-; \mathcal{H}^{(p)}_-),
\end{equation}
be a family of bounded operators. Such objects will play the role of observables in the following.
By a slight abuse of notation we identify $a^{(p,q)}$ with the family obtained by adjoining zeroes to it.

Define
\begin{equation*}
\mathfrak{B}^G \;\deq\; \{a = (a^{(p,q)}) \,:\, a^{(p,q)} = 0 \text{ for all but finitely many } (p,q)\}\,.
\end{equation*}
We introduce a norm on $\mathfrak{B}^G$ through
\begin{equation}
\norm{a}_{\mathfrak{B}^G} \;\deq\; \sum_{p,q \in \N} \norm{a^{(p,q)}}\,,
\end{equation}
and define $\ol{\mathfrak{B}}^G$ as the completion of $\mathfrak{B}^G$.

We also introduce a multiplication on $\ol{\mathfrak{B}}^G$ defined by
\begin{equation}
(ab)^{(p,q)} \;\deq\; \sum_{\substack{p_1 + p_2 = p \\ q_1 + q_2 = q}} (-1)^{p_2(p_1 + q_1)} \, P_- (a^{(p_1, q_1)} \otimes b^{(p_2, q_2)}) P_-\,.
\end{equation}
The seemingly odd choice of sign will soon become clear. It is now easy to check that $\ol{\mathfrak{B}}^G$ is an associative Banach algebra with identity
\begin{equation*}
\umat^{(p,q)} \;=\; \delta_{p0} \, \delta_{q0}\,.
\end{equation*}
Note that $\ol{\mathfrak{B}}^G$ bears a $\Z$-grading, with degree map 
\begin{equation*}
\deg a^{(p,q)} \;\deq\; p-q\,. 
\end{equation*}
An observable is gauge invariant when its degree is equal to $0$.
One readily sees that
\begin{equation*}
ab = (-1)^{\deg a \, \deg b} ba\,.
\end{equation*}

We now identify $\ol{\mathfrak{B}}^G$ with a Grassmann algebra of anticommuting variables. For $f \in \mathcal{H}$ define $\psi(f) \in \cal L(\mathcal{H}; \C) \subset \ol{\mathfrak{B}}^G$ through
\begin{equation}
\psi(f) \, g \;\deq\; \scalar{f}{g}
\end{equation}
and $\ol{\psi}(f) \in \cal L(\C; \mathcal{H}) \subset \ol{\mathfrak{B}}^G$ through
\begin{equation}
\ol{\psi}(f) z \;\deq\; f z\,.
\end{equation}
We may now consider arbitrary polynomials in the variables $\{\ol{\psi}(f), \psi(f) \,:\, f \in \mathcal{H}\}$. It is a simple matter to check that
\begin{equation*}
\psi(f) \, \psi(g) + \psi(g) \, \psi(f) \;=\; \psi(f) \ol{\psi}(g) + \ol{\psi}(g) \, \psi(f) \;=\; \ol{\psi}(f) \, \ol{\psi}(g) + \ol{\psi}(g) \, \ol{\psi}(f) \;=\; 0\,, 
\end{equation*}
for all $f,g \in \mathcal{H}$. Furthermore, we have that
\begin{equation} \label{polynomial in psis}
\ol{\psi}(f_p) \cdots \ol{\psi}(f_1) \psi(g_1) \cdots \psi(g_q) \;=\; P_-^{(p)} \, \ket{f_1 \otimes \cdots \otimes f_p} \bra{g_1 \otimes \cdots \otimes g_q} \, P_-^{(q)}\,. 
\end{equation}
Linear combinations of expressions of the form \eqref{polynomial in psis} are dense in $\ol{\mathfrak{B}}^G$ (in the strong operator topology).
It is often convenient to write a family $a$ of bounded operators using the ``Grassmann generators'' $\{\ol{\psi}, \psi\}$. To this end we set
\begin{equation*}
\psi(x) \;\deq\; \psi(\delta_x) \,,\qquad \ol{\psi}(x) \;\deq\; \ol{\psi}(\delta_x)\,,
\end{equation*}
where $\delta_x$ is Dirac's delta mass at $x$.
Expressions of the form \eqref{polynomial in psis} are now understood as densely defined quadratic forms. One immediately finds
\begin{multline}
a \;\equiv\; \A^{\! G}(a) \;\deq\; \sum_{p,q} \int \dd x_1 \dots \dd x_p \, \dd y_1 \dots \dd y_q
\\
{}\times{} \ol{\psi}(x_p) \cdots \ol{\psi}(x_1) \, a^{(p,q)}(x_1, \dots, x_p; y_1, \dots, y_q) \, \psi(y_1) \cdots \psi(y_q)\,.
\end{multline}
We use the notation $\A^{\! G}(a)$ to emphasize that the family $a$ is represented using Grassmann generators.

\subsection{A graded Poisson bracket}
Next, we note that $\mathfrak{B}^G$ carries the graded Poisson bracket
\begin{equation}
\{a, b\} \;\deq\; \ii \int \dd x \; \qbb{a \, \frac{\overset{\leftarrow}{\delta}}{\delta \ol{\psi}(x)}
\frac{\overset{\rightarrow}{\delta}}{\delta \psi(x)} \, b
+
a \, \frac{\overset{\leftarrow}{\delta}}{\delta \psi(x)}
\frac{\overset{\rightarrow}{\delta}}{\delta \ol{\psi}(x)} \, b}\,,
\end{equation}
where $a,b \in \mathfrak{B}^G$. Here we use the usual conventions for derivatives with respect to Grassmann variables (see e.g.\ \cite{Salmhofer}, Appendix B).
In terms of kernels the graded Poisson bracket can be expressed as
\begin{equation}
\{\psi(x), \ol{\psi}(y)\} \;=\; \ii \delta(x-y) \, \qquad \{\psi(x), \psi(y)\} \;=\; \{\ol{\psi}(x), \ol{\psi}(y)\} \;=\; 0\,.
\end{equation}
We now list the important properties of the graded Poisson bracket.
\begin{itemize}
\item[(i)]
$\{a, b\} \;=\; (-1)^{1 + \deg a \, \deg b} \{b,a\}$\,.
\item[(ii)]
$(-1)^{\deg b \, (\deg a +  \deg c)} \{a,\{b, c\}\} + \text{cyclic permutations} \;=\; 0$\,.
\item[(iii)]
$\{a, bc\} \;=\; \{a, b\} c + (-1)^{\deg a \, \deg b} b \{a,c\}$\,.
\end{itemize}
\begin{proof}
Let us start with (i):
\begin{align*}
\{a,b\} &\;=\; \ii \int \dd x \, \qBB{(-1)^{\deg a + 1}\frac{\delta a}{\delta \ol{\psi}(x)} \frac{\delta b}{\delta \psi(x)} + (-1)^{\deg a + 1} \frac{\delta a}{\delta \psi(x)} \frac{\delta b}{\delta \ol{\psi}(x)}}
\\
&\;=\; \ii \int \dd x \, \qBB{(-1)^{\deg a \, \deg b + \deg b} \frac{\delta b}{\delta \psi(x)} \frac{\delta a}{\delta \ol{\psi}(x)} + (-1)^{\deg a \, \deg b + \deg b} \frac{\delta b}{\delta \ol{\psi}(x)} \frac{\delta a}{\delta \psi(x)}}
\\
&\;=\; (-1)^{1 + \deg a\, \deg b} \{b,a\}\,.
\end{align*}
In order to show (ii), we note that the left-hand side can be written as a sum of three terms, the first of which contains second derivatives of $a$, the second second derivatives of $b$ and the third second derivatives of $c$. Let us consider the third one. It is equal to the terms containing second derivatives of $c$ of
\begin{multline*}
(-1)^{\deg b (\deg a + \deg c)} \{a, \{b,c\}\} + (-1)^{\deg c (\deg b + \deg a)} \{b, \{c,a\}\}
\\
=\; (-1)^{\deg b (\deg a + \deg c)} \{a, \{b,c\}\} + (-1)^{\deg c (\deg b + \deg a) + 1 + \deg c \, \deg a} \{b, \{a,c\}\}\,,
\end{multline*}
where (i) was used. Define the derivation $L_a b \deq \{a,b\}$\,. Thus we need to compute the terms containing second derivatives of $c$ of
\begin{equation*}
(-1)^{\deg a \, \deg b + \deg b \, \deg c} L_a L_b c - (-1)^{\deg b \, \deg c} L_b L_a c\,.
\end{equation*}
Since we are only considering terms containing second derivatives of $c$, both derivations $L_a$ and $L_b$ must act only on $c$, and one finds
\begin{equation*}
(-1)^{\deg a \, \deg b + \deg b \, \deg c} L_a L_b c - (-1)^{\deg a \, \deg b + \deg b \, \deg c} L_a L_b c \;=\; 0\,.
\end{equation*}
We omit the straightforward proof of (iii).
\end{proof}
Furthermore, one finds by explicit calculation
\begin{multline} \label{graded Poisson bracket computed}
\{\A^{\! G}(a^{(p_1,q_1)}), \A^{\! G}(b^{(p_2,q_2)})\} \;=\; \ii \, (-1)^{(p_2 + 1)(p_1 + q_1)}  q_1 p_2 \, \A^{\! G} \qB{\pb{a^{(p_1, q_1)} \otimes \umat^{(p_2 - 1)}} \, \pb{\umat^{(q_1 - 1)} \otimes b^{(p_2, q_2)}}} 
\\
{}-{}\ii \, (-1)^{(q_1 + 1)(p_2 + q_2)} p_1 q_2 \, \A^{\! G} \qB{\pb{b^{(p_2, q_2)} \otimes \umat^{(p_1 - 1)}} \, \pb{\umat^{(q_2 - 1)} \otimes a^{(p_1, q_1)}}}\,.
\end{multline}
%Now \eqref{graded Poisson bracket computed} implies
%\begin{equation}
%\{\A^{\! G}(a^{(p)}), \A^{\! G}(b^{(q)})\} \;=\; \ii pq \A^{\! G} \pb{[a^{(p)}, b^{(q)}]_1}\,.
%\end{equation}

\subsection{States}
We now introduce a space of states $\mathfrak{R} \subset \pb{\ol{\mathfrak{B}}^G}^*$ on the algebra $(\ol{\mathfrak{B}}^G, \norm{\cdot}_{\mathfrak{B}^G})$. A convenient choice is
\begin{equation*}
\mathfrak{R} \;\deq\; \hb{\rho = (\rho^{(p,q)})_{p,q \in \N} \,:\, \rho^{(p,q)} \in \cal L(\mathcal{H}_-^{(q)}; \mathcal{H}_-^{(p)})\,, \; \norm{\rho}_{\mathfrak{R}} < \infty}\,,
\end{equation*}
where
\begin{equation*}
\norm{\rho}_{\mathfrak{R}} \;\deq\; \sup_{p,q \in \N} \norm{\rho^{(p,q)}}_1
\end{equation*}
and
\begin{equation*}
\norm{\rho^{(p,q)}}_1 \;\deq\; \sup \hb{\absb{\tr (\rho^{(p,q)} a^{(q,p)})} \,:\, a^{(q,p)} \in \cal L(\mathcal{H}_-^{(p)}, \mathcal{H}_-^{(q)}) \,,\; \norm{a^{(q,p)}} \leq 1}\,.
\end{equation*}
% (We choose the space $\fra R$ for convenience. One can check that $\fra R$ is the dual space of a slightly smaller space of observables, defined similarly to $\ol{\fra B}^G$ except that each $a^{(p,q)}$ is required to be a compact operator.)
Note that if $p = q$ then $\norm{\cdot}_1$ is the usual trace norm. The dual action is given by
\begin{equation*}
\scalar{\rho}{a} \;\deq\; \sum_{p,q \in \N} \tr (\rho^{(p,q)} a^{(q,p)})\,.
\end{equation*}
We abbreviate $\rho^{(p,p)} \equiv \rho^{(p)}$ in the case of gauge invariant states. Next, we note that \eqref{polynomial in psis} implies that the operator kernel of $\rho^{(p,q)}$ may be expressed as
\begin{equation} \label{operator kernel of rho}
\rho^{(p,q)}(x_1, \dots, x_p;y_1, \dots, y_q) \;=\; \scalarb{\rho}{\ol{\psi}(y_q) \cdots \ol{\psi}(y_1) \psi(x_1) \cdots \psi(x_p)}\,.
\end{equation}

There is a particular subset of gauge invariant states that is of interest for studying the Hartree-Fock dynamics. Let $\gamma \in \mathcal{D}$ be a one-particle density matrix. Define the state $\rho_\gamma$ through $\rho_\gamma^{(p,q)} = 0$ if $p \neq q$ and
\begin{equation} \label{general state from density matrix}
\rho_\gamma^{(p,p)} \;\deq\; \gamma^{(p)}\,,
\end{equation}
where $\gamma^{(p)}$ is defined in \eqref{p-particle action of quasi-free state}. One immediately finds $\norm{\rho_\gamma}_1 = \norm{\gamma}_1$. %The set $\{\rho_\gamma \,:\, \gamma \in \mathcal{D}\} \;\subset\; \mathfrak{R}$ is a cone of positive elements in the sense that $\scalar{\rho_\gamma}{a} \geq 0$ whenever $a^{(p)} \geq 0$ for all $p$.

\subsection{Hamilton function and dynamics}
Let $h$ be the one-particle Hamiltonian and $w$ the two-body interaction potential. We define a Hamilton function on (a dense %\footnote{For instance: Define $S \deq (\umat - \Delta)^{1/2}$, and consider the space $\{\rho \,:\, \norm{S \rho^{(1)} S}_1 < \infty\,,\; \norm{S_1 S_2 \rho^{(2)} S_2 S_1}_1 < \infty\}$. The one sees immediately that $\abs{\tr(\rho^{(1)}h)} < \infty$. Furthermore,
%\begin{equation*}
%\abs{\tr(\rho^{(2)} w)} \;=\; \absb{\tr\pb{S_1 S_2 \rho^{(2)} S_2 S_1 S_1^{-1} S_2^{-1} w S_2^{-1} S_1^{-1}}}
%\;\leq\; \normb{S_1 S_2 \rho^{(2)} S_2 S_1}_1 \, \normb{S_1^{-1} S_2^{-1} w S_2^{-1} S_1^{-1}}\,.
%\end{equation*}
%Then $S_1 S_2 \geq (\umat - \Delta_{12})^{1/2}$ and the Hardy-Littlewood-Sobolev and Sobolev inequalities imply that $ \normb{S_1^{-1} S_2^{-1} w S_2^{-1} S_1^{-1}} < \infty$ potentials $w$ with Coulomb-type singularities.}
subset of) the phase space $\mathfrak{R}$ through
\begin{align}
H &\;\deq\; \A^{\! G}(h) + \frac{1}{2} \A^{\! G}(W) 
\notag \\ \label{Grassmann Hamiltonian}
&\;=\; \int \dd x \, \dd y \; \ol{\psi}(x) \, h(x;y) \, \psi(y) + \frac{1}{2} \int \dd x \, \dd y \; \ol{\psi}(y) \ol{\psi}(x) \, w(w - y) \, \psi(x) \psi(y)\,.
\end{align}

The Hamiltonian equations of motion read
\begin{equation*}
\dot{a} \;=\; \{H, a\}\,,
\end{equation*}
where $a \in \mathfrak{B}^G$.
Instead of the ``Heisenberg'' evolution of $a$ we consider the dual ``Schr\"odinger'' evolution of states:
\begin{equation*}
\scalar{\rho(t)}{a} \;\deq\; \scalar{\rho}{a(t)}\,.
\end{equation*}
The equation of motion for states reads
\begin{multline} \label{equation of motion for grassmann states}
\ii \partial_t \rho^{(p,q)}(x_1, \dots, x_p; y_1, \dots, y_q) \;=\; \pBB{\sum_{i = 1}^p h_{x_i} - \sum_{i = 1}^q h_{y_i}} \rho^{(p,q)}(x_1, \dots, x_p; y_1, \dots, y_q)
\\
+ \int \dd u \; \pBB{\sum_{i = 1}^p w(u - x_i) - \sum_{i = 1}^q w(u - y_i)} \rho^{(p+1,q+1)}(x_1, \dots, x_p, u; y_1, \dots, y_q, u)\,.
\end{multline}
This has the form of an infinite hierarchy, which decouples over subspaces of different degree.
In order to show \eqref{equation of motion for grassmann states} we compute 
\begin{multline*}
\ii \hb{H, \ol{\psi}(y_q) \cdots \ol{\psi}(y_1) \psi(x_1) \cdots \psi(x_p)} \;=\; \pBB{\sum_{i = 1}^p h_{x_i} - \sum_{i = 1}^q h_{y_i}} \, \ol{\psi}(y_q) \cdots \ol{\psi}(y_1) \psi(x_1) \cdots \psi(x_p)
\\
{}+{} \int \dd u \pBB{\sum_{i = 1}^p w(u - x_i) - \sum_{i = 1}^q w(u - y_i)}
\ol{\psi}(u) \ol{\psi}(y_q) \cdots \ol{\psi}(y_1) \psi(x_1) \cdots \psi(x_p) \psi(u)\,.
\end{multline*}
Then \eqref{equation of motion for grassmann states} follows from \eqref{operator kernel of rho} and
\begin{align*}
\ii \partial_t \rho^{(p,q)}(x_1, \dots, x_p;y_1, \dots, y_q) &\;=\; \ii \partial_t \scalarb{\rho}{\ol{\psi}(y_q) \cdots \ol{\psi}(y_1) \psi(x_1) \cdots \psi(x_p)}
\\
&\;=\; \scalarb{\rho}{\ii \hb{H, \ol{\psi}(y_q) \cdots \ol{\psi}(y_1) \psi(x_1) \cdots \psi(x_p)}}\,.
\end{align*}

Next, we outline how to solve the equation of motion \eqref{equation of motion for grassmann states}. Let us first rewrite it as
\begin{multline*}
\ii \partial_t \rho^{(p,q)} \;=\; \sum_{i = 1}^p h_i \rho^{(p,q)} - \sum_{i = 1}^q \rho^{(p,q)} h_i
\\
{}+{} \sum_{i = 1}^p \tr_{p+1, q+1} \pb{W_{i \; p+1} \rho^{(p+1, q+1)}} - \sum_{i = 1}^q \tr_{p+1, q+1} \pb{\rho^{(p+1, q+1)} W_{i \, q+1}}\,.
\end{multline*}
We may now proceed exactly as with the density matrix Hartree-Fock equation \eqref{Hartree-Fock for density matrices}, i.e.\ express it as an integral equation in the interaction picture. This yields a tree expansion for the quantity $\tr\pb{\rho^{(p,q)}(t) \, a^{(q,p)}}$, where $\rho(0) \in \mathfrak{R}$. We omit the uninteresting details. As above, the tree expansion converges if $t < T$, where
\begin{equation} \label{definition of T}
T \;\deq\; (2^{11} \pi \kappa^2)^{-1}\,.
\end{equation}
Unfortunately, the time evolution \eqref{equation of motion for grassmann states} does not preserve the norm of $\rho$, which means that we cannot iterate the short-time result.

\subsection{A Schwinger-Dyson series for the time evolution of observables}
From now on, we only consider gauge invariant quantities. Take some gauge invariant state $\rho 
= (\rho^{(p)})_{p \in \N} \in \mathfrak{R}$. For simplicity, we assume that the sequence $\rho$ 
is finite (as is the case if $\rho$ is defined by a Slater determinant, see below). Let us 
denote the Hamiltonian flow on $\mathfrak{R}$ by $\phi^t$. We have seen that $\phi^t$ is 
well-defined by its tree expansion for $t < T$. The solution of \eqref{equation of motion for 
grassmann states} with initial data $\rho$, $\rho(t) = \phi^t(\rho)$, satisfies the equation
\begin{equation} \label{grassman state duhamel}
\tilde{\rho}^{(p)}(t) \;=\; \rho^{(p)} - \ii \int_0^t \dd s \; \sum_{i = 1}^p\tr_{p+1} \qb{W_{i\;p+1,s}, \tilde{\rho}^{(p+1)}(s)}\,,
\end{equation}
where $\tilde{\rho}^{(p)}(t) \deq \ee^{\ii \sum_i h_i t} \, \rho^{(p)}(t) \, \ee^{-\ii \sum_i h_i t}$. Let us take a gauge invariant observable $a^{(p,p)} \equiv a^{(p)} \in \mathfrak{A}^G$, where
\begin{equation*}
\mathfrak{A}^G \;\deq\; \{a \in \mathfrak{B}^G \,:\, a^{(p,q)} = 0 \text{ if } p \neq q\}
\end{equation*}
is the set of gauge invariant observables. Then \eqref{grassman state duhamel} implies
\begin{equation*}
\tr \pb{a^{(p)} \rho^{(p)}(t)} \;=\; \tr \pb{a^{(p)}_t \tilde{\rho}^{(p)}(t)} 
\;=\; \tr \pb{a^{(p)}_t \rho^{(p)}} + \ii \int_0^t \dd s \; \sum_{i = 1}^p \tr \pB{\qB{W_{i\;p+1, s}, a^{(p)}_t \otimes \umat} \tilde{\rho}^{(p+1)}(s)}\,.
\end{equation*}
Iteration of this identity gives
\begin{equation*}
\tr \pb{a^{(p)} \rho^{(p)}(t)} \;=\; \sum_{k = 0}^\infty \tr \pB{G_t^{(k,0)}(a^{(p)}) \, \rho^{(p+k)}}\,.
\end{equation*}
Summarizing:
\begin{multline*}
\scalarb{a^{(p)} \circ \phi^t}{\rho} \;=\; \scalarb{a^{(p)}}{\rho(t)} \;=\; \tr \pb{a^{(p)} \rho^{(p)}(t)} 
\\
=\; \sum_{k = 0}^\infty \tr \pB{G_t^{(k,0)}(a^{(p)}) \, \rho^{(p+k)}} \;=\; \scalarBB{\sum_{k = 0}^\infty G_t^{(k,0)}(a^{(p)})}{\rho}\,.
\end{multline*}
This series converges for $t < T$, uniformly for bounded $\norm{a^{(p)}}_{\mathfrak{B}^G}$ and $\norm{\rho}_{\mathfrak{R}}$. Therefore we get the norm-convergent series
\begin{equation}  \label{tree expansion for superhamiltonian flow}
\A^{\! G}(a^{(p)}) \circ \phi^t \;=\; \sum_{k = 0}^\infty \A^{\! G} \pB{G_t^{(k,0)}(a^{(p)})}\,,
\end{equation}
provided that $t < T$.

Finally, we discuss the relationship between the Hartree-Fock dynamics and the dynamics generated by \eqref{equation of motion for grassmann states}. Take a density matrix $\gamma \in \mathcal{D}$ and consider the state $\rho = \rho_\gamma$ defined in \eqref{general state from density matrix}. If one chooses a sequence $\gamma_N$ such that $\norm{\gamma_N} \to 0$ as $N \to \infty$ (e.g.\ a sequence of Slater determinants), then Lemma \ref{lemma HF and tree terms} implies that \eqref{equation of motion for grassmann states} and the Hartree-Fock equation describe the same dynamics for large $N$.

\subsection{Quantization and a Egorov-type theorem}
In this final section we introduce a Wick quantization of the above ``superhamiltonian'' system and formulate the mean-field limit as a Egorov-type theorem. It is advantageous to use the second quantized formulation of Section \ref{section: second quantization}.

We define \emph{quantization}, denoted by $\widehat{(\cdot)}_\nu$, as the linear map defined by the replacement $\psi(x) \mapsto a_\nu(x)$ and $\ol{\psi}(x) \mapsto a^*_\nu(x)$, followed by Wick ordering. Quantization $\widehat{(\cdot)}_\nu$ maps observables in $\fra B^G$ to closed operators on $\cal F$. Moreover, we have
\begin{equation*}
\widehat{(\cdot)}_\nu \,:\, \A^{\! G}(a^{(p)}) \;\mapsto\; \widehat{\A}_\nu(a^{(p)})\,.
\end{equation*}
Using \eqref{graded Poisson bracket computed}, it is easy to see that, for $A,B \in \mathfrak{A}^G$,
\begin{equation*}
\qb{\widehat{A}_\nu, \widehat{B}_\nu} \;=\; \frac{\nu^{-1}}{\ii} \widehat{\h{A, B}}_\nu + O(\nu^{-2})\,.
\end{equation*}
This identifies $\nu^{-1}$ as the deformation parameter of $\widehat{(\cdot)}_\nu$.

Extending the definition of $\widehat{(\cdot)}_\nu$ to unbounded operators in the obvious way, we define a Hamiltonian $\widehat{H}_\nu$ on $\cal F$ as the quantization of the Hamilton function $H$ defined in \eqref{Grassmann Hamiltonian}. When restricted to $\mathcal{H}^{(N)}_-$, $N \widehat{H}_N$ is equal to the Hamiltonian with mean-field scaling \eqref{mean-field Hamiltonian}.

Now \eqref{one loop expansion}, \eqref{estimate on loop terms} and \eqref{tree expansion for superhamiltonian flow} yield the following Egorov-type theorem.
\begin{theorem}
Let $A \in \mathfrak{A}^G$ and $t < T$, with $T$ defined in \eqref{definition of T}. Then
\begin{equation*}
\normbb{\pbb{\ee^{\ii t N \widehat{H}_N} \, \widehat{A}_N\, \ee^{-\ii t N \widehat{H}_N}  - \widehat{(A \circ \phi^t)}_N}\biggr|_{\mathcal{H}^{(N)}_-}} \;\leq\; \frac{C}{N}\,,
\end{equation*}
for some $C > 0$.
\end{theorem}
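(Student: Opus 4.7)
By linearity of quantisation, of the Hamiltonian flow $\phi^t$ on tree-term expansions, and of the Heisenberg dynamics, it suffices to treat a single homogeneous observable $A = \A^{\!G}(a^{(p)})$ with $a^{(p)} \in \mathcal{B}(\mathcal{H}^{(p)}_-)$. The plan is to identify the two expressions inside the norm as, respectively, the full Schwinger-Dyson series and its tree part, restricted to $\mathcal{H}^{(N)}_-$, and to read off that the difference is precisely the loop remainder $L_N(t)$ of \eqref{one loop expansion}.

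First I would use the two identifications stated in the setup of the quantisation: on the $N$-particle subspace $\mathcal{H}^{(N)}_-$, one has $N \widehat{H}_N = H_N$ (the mean-field Hamiltonian \eqref{mean-field Hamiltonian}) and $\widehat{A}_N = \widehat{\A}_N(a^{(p)})$ (the second-quantised observable of \eqref{second quantisation of a}). Hence the restriction of the Heisenberg-evolved observable satisfies
\begin{equation*}
\pb{\ee^{\ii t N \widehat{H}_N} \, \widehat{A}_N\, \ee^{-\ii t N \widehat{H}_N}} \bigr|_{\mathcal{H}^{(N)}_-}
\;=\; \ee^{\ii t H_N} \, \widehat{\A}_N(a^{(p)}) \, \ee^{-\ii t H_N}\,,
\end{equation*}
to which the one-loop expansion \eqref{one loop expansion} applies directly, yielding a sum of tree contributions plus a remainder $L_N(t)$ with $\norm{L_N(t)} \leq C(p,\kappa,t)\,\norm{a^{(p)}}\,N^{-1}$ by \eqref{estimate on loop terms}.

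Next I would treat the ``Hartree-Fock'' side. By \eqref{tree expansion for superhamiltonian flow}, valid in norm for $t < T$,
\begin{equation*}
A \circ \phi^t \;=\; \sum_{k = 0}^\infty \A^{\!G}\pb{G^{(k,0)}_t(a^{(p)})}\,,
\end{equation*}
with convergence in $\mathfrak{B}^G$-norm. Since quantisation $\widehat{(\cdot)}_N$ is linear and continuous on each sector (its restriction to $\mathcal{H}^{(N)}_-$ is bounded in terms of the $\mathfrak{B}^G$-norm after restriction, using $\norm{\widehat{\A}_N(b^{(p+k)})\vert_{\mathcal{H}^{(N)}_-}} \leq \norm{b^{(p+k)}}$ as in \cite{bosons}), one may interchange the quantisation with the sum, and then restrict to $\mathcal{H}^{(N)}_-$ to obtain
\begin{equation*}
\widehat{(A \circ \phi^t)}_N \Bigr|_{\mathcal{H}^{(N)}_-} \;=\; \sum_{k = 0}^\infty \widehat{\A}_N\pb{G^{(k,0)}_t(a^{(p)})}\,,
\end{equation*}
which is precisely the tree-sum appearing in \eqref{one loop expansion}.

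Subtracting the two identities, the tree series cancel and one is left with $L_N(t)\vert_{\mathcal{H}^{(N)}_-}$, whose norm is bounded by $C/N$ via \eqref{estimate on loop terms}, proving the theorem. The only point requiring some care is the interchange of the infinite sum over $k$ with the quantisation and the restriction; this is the main obstacle, and it is handled by the uniform (in $k$) bound $\norm{\widehat{\A}_N(b)\vert_{\mathcal{H}^{(N)}_-}} \leq \norm{b}$ on each sector together with the norm convergence of $\sum_k G^{(k,0)}_t(a^{(p)})$ (in the appropriate Banach algebra sense) for $t < T$, which is exactly the content of \eqref{tree expansion for superhamiltonian flow}. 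The restriction on $t$ in the statement is inherited from this convergence radius and from the short-time hypothesis under which \eqref{one loop expansion}-\eqref{estimate on loop terms} were established.
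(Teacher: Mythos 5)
Your proposal is correct and follows exactly the route the paper intends: it combines \eqref{one loop expansion} and \eqref{estimate on loop terms} for the Heisenberg side (using $N\widehat{H}_N|_{\mathcal{H}^{(N)}_-} = H_N$ and $\widehat{A}_N|_{\mathcal{H}^{(N)}_-} = \widehat{\A}_N(a^{(p)})$) with the tree expansion \eqref{tree expansion for superhamiltonian flow} for the classical flow, so that the difference is precisely the loop remainder $L_N(t)$. The paper states this in a single sentence; your justification of the term-by-term quantisation via the sector-wise bound $\norm{\widehat{\A}_N(b)|_{\mathcal{H}^{(N)}_-}} \leq \norm{b}$ is the right way to make that sentence rigorous.
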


\appendix
\section{Hamiltonian formulation for density matrices}
In Section \ref{section: Hartree-Fock equation} we chose a Hamiltonian formulation of the Hartree-Fock equation \eqref{Hartree-Fock} in terms of sequences of orbitals. Alternatively, we could just as well have used a Hamiltonian formulation in terms of density matrices. To see how the density matrix Hartree-Fock equation \eqref{Hartree-Fock for density matrices} can be written as a Hamiltonian equation of motion of a classical Hamiltonian system, consider the Hilbert space
\begin{equation*}
\widehat{\mathcal{H}} \;=\; \mathcal{L}^2(\mathcal{H})\,,
\end{equation*}
the space of Hilbert-Schmidt operators, with scalar product
\begin{equation*}
\scalar{\kappa}{\rho} \;\deq\; \tr (\kappa^* \rho)\,.
\end{equation*}
We write the density matrix $\gamma \in \mathcal{D}$ as $\gamma = \kappa \kappa^*$, where $\kappa \in \widehat{\mathcal{H}}$.
The classical phase space is then given by a Sobolev-type space of Hilbert-Schmidt operators
\begin{equation*}
\widehat{\Gamma} \;\deq\; \hb{\kappa \in \widehat{\mathcal{H}} \,:\, \tr (\kappa^* (\umat - \Delta) \kappa) < \infty}\,.
\end{equation*}
We define polynomial functions on $\widehat{\Gamma}$ through
\begin{equation*}
\B(a^{(p)})(\kappa) \;\deq\; \scalarb{\kappa^{\otimes p}}{a^{(p)} \kappa^{\otimes p}}\,,
\end{equation*}
where $a^{(p)} \in \cal L(\mathcal{H}^{\otimes p})$. 
%Note that $\B(a^{(p)})$ is well-defined on the larger space $\widehat{\mathcal{H}}$ through its unique continuous extension.
%We denote by $\widehat{\mathfrak{A}}$ the linear hull of functions of the type $\B(a^{(q,p)})$.

The affine space $\widehat{\Gamma}$ carries a Symplectic form defined by
\begin{equation*}
\omega \;=\; -\ii \int \dd x \, \dd y \; \dd \bar{\kappa}(x,y) \wedge \dd \kappa(x,y)\,,
\end{equation*}
where $\kappa(x,y)$ is the operator kernel of $\kappa$. The Poisson bracket then reads
\begin{align*}
\h{\kappa^\#(x,y), \kappa^\#(x',y')} &\;=\; 0
\\
\h{\kappa(x,y), \bar{\kappa}(x',y')} &\;=\; - \ii \delta(x - x') \delta(y - y')\,.
\end{align*}
The Hamilton function is defined by
\begin{equation*}
H \;\deq\; \B(h) + \frac{1}{2} \B(\mathcal{W})\,.
\end{equation*}
By using Sobolev-type inequalities one readily sees that $H$ is well-defined on $\widehat{\Gamma}$.
After a short computation, one finds that the Hamiltonian equation of motion,
\begin{equation*}
\ii \partial_t \kappa(x,y) \;=\; \frac{\delta H}{\delta \ol{\kappa}(x,y)} \;=\; \ii \, \{H, \kappa(x,y)\}\,,
\end{equation*}
reads
\begin{equation*}
\ii \partial_t \kappa \;=\; h \kappa + \tr_2 \pb{\mathcal{W} \, \kappa \otimes (\kappa \kappa^*)}\,.
\end{equation*}
It follows that $\gamma = \kappa \kappa^*$ satisfies \eqref{Hartree-Fock for density matrices}.


\begin{thebibliography}{99}

\bibitem{Bach}
V.\ Bach, \emph{Error bound for the Hartree-Fock energy of atoms and molecules}, Commun.\ Math.\ Phys. \textbf{147} (1992), no.\ 3, 527--548. 

\bibitem{Bove}
A.\ Bove, G.\ Da Prato and G.\ Fano, \emph{An existence proof for the Hartree-Fock time-dependent problem with bounded two-body interaction}, Commun.\ Math.\ Phys.\ \textbf{37} (1974), 183--191.

\bibitem{BardosGolse}
C.\ Bardos, F.\ Golse, A.\ D.\ Gottlieb and J.\ Mauser, \emph{Mean-field dynamics of fermions and the time-dependent Hartree-Fock equation},
J.\ Math.\ Pures et Appl.\ \textbf{82} (2003), no.\ 6, 665--683.

\bibitem{BratteliRobinson}
O.~Bratteli and D.~W.~Robinson, \emph{Operator Algebras and Quantum Statistical Mechanics 2}, Springer, 2002.

\bibitem{Chadam}
J.~M.~Chadam, \emph{The Time Dependent Hartree-Fock Equations with Coulomb Two-Body Interaction}, Commun.\ Math.\ Phys.\ \textbf{46} (1976), 99--104.

\bibitem{ElgartErdosSchleinYau}
A.\ Elgart, L.\ Erd\H{o}s, B.\ Schlein and H.-T.\ Yau, \emph{Nonlinear Hartree equation as the mean field limit of weakly coupled fermions}, J.\ Math.\ Pures et Appl.\ \textbf{83} (2004), 1241--1273.

\bibitem{ErdosSalmhoferYau2004}
L.\ Erd\H{o}s, M.\ Salmhofer and H.-T.\ Yau, \emph{On the quantum Boltzmann equation}, J.\ Stat.\ Phys.\ \textbf{116} (2004), 367--380.

\bibitem{Fefferman1}
C.\ Fefferman and L.\ Seco, \emph{On the energy of a large atom}, Bull.\ Amer.\ Soc.\ (N.S.) \textbf{23} (1990), no.\ 2, 525--530.

\bibitem{Fefferman2}
C.\ Fefferman and L.\ Seco, \emph{On the Dirac and Schwinger corrections to the ground energy of an atom}, Adv.\ Math.\  \textbf{107} (1994), no.\ 1, 1--185.

\bibitem{bosons}
J.\ Fr\"ohlich, A.\ Knowles and S.\ Schwarz, \emph{On the mean-field limit of bosons with Coulomb two-body interaction}, math-ph/0805.4299v1.

\bibitem{GrafSolovej}
G.\ M.\ Graf and J.\ P.\ Solovej, \emph{A correlation estimate with applications to quantum systems with Coulomb interactions}, Rev.\ Math.\ Phys.\ \textbf{6} (1994), no.\ 5a, 977-997.

\bibitem{LiebLoss}
E.\ H.\ Lieb and M.\ Loss, \emph{Analysis}, American Mathematical Society, 2001.

\bibitem{LiebSimon1}
E.\ H.\ Lieb and B.\ Simon, \emph{The Hartree-Fock theory for Coulomb systems}, Commun.\ Math.\ Phys.\ \textbf{53} (1977), no.\ 3, 185--194.

\bibitem{LiebSimon2}
E.\ H.\ Lieb and B.\ Simon, \emph{The Thomas-Fermi theory of atoms, molecules and solids}, Adv.\ in Math.\ \textbf{23} (1977), no.\ 1, 22--116.

\bibitem{NarnhoferSewell}
H.\ Narnhofer and G.\ L.\ Sewell, \emph{Vlasov hydrodynamics of a quantum mechanical model}, Commun.\ Math.\ Phys.\ \textbf{79} (1981), 9--24.

% \bibitem{ReedSimonI}
% M.~Reed and B.~Simon, \emph{Methods of Modern Mathematical Physics I: Functional Analysis}, Academic Press, 1980.

\bibitem{ReedSimonII}
M.~Reed and B.~Simon, \emph{Methods of Modern Mathematical Physics II: Fourier Analysis, Self-Adjointness}, Academic Press, 1975.

\bibitem{Salmhofer}
M.\ Salmhofer, \emph{Renormalization, An Introduction}, Springer, 1999.

\bibitem{Spohn}
H.\ Spohn, \emph{On the Vlasov hierarchy}, Math.\ Meth.\ Appl.\ Sci.\ \textbf{3} (1981), no.\ 4, 445--455.

\bibitem{Zagatti}
S.~Zagatti, \emph{The Cauchy problem for Hartree-Fock time-dependent equations}, Ann.\ Inst.\ Henri~Poincar\'e (A) \textbf{56} (1992), 357--374.

\end{thebibliography}
\end{document}